\begin{document}

\newtheorem{definition}{\bf ~~Definition}
\newtheorem{observation}{\bf ~~Observation}
\newtheorem{lemma}{\bf ~~Lemma}
\newtheorem{proposition}{\bf ~~Proposition}
\newtheorem{remark}{\bf ~~Remark}
\renewcommand{\algorithmicrequire}{ \textbf{Input:}} 
\renewcommand{\algorithmicensure}{ \textbf{Output:}} 
\newcommand*{\TitleFont}{%
      \selectfont}
\title{\TitleFont Cellular UAV-to-Device Communications:  Trajectory Design and Mode Selection by Multi-agent Deep Reinforcement Learning}

\author{
\IEEEauthorblockN{
{Fanyi Wu}, \IEEEmembership{Student Member, IEEE},
{Hongliang Zhang}, \IEEEmembership{Member, IEEE},\\
{Jianjun Wu},
{and Lingyang Song}, \IEEEmembership{Fellow, IEEE}}\\

\vspace{-0.5cm}
\thanks{Part of this paper was presented at IEEE GLOBECOM, Waikoloa, HI, USA, Dec.~2019~\cite{FHJL-2019}.}
\thanks{F.~Wu, J.~Wu, and L.~Song are with Department of Electronics Engineering, Peking University, Beijing 100871, China~(email: fanyi.wu@pku.edu.cn, just@pku.edu.cn, lingyang.song@pku.edu.cn).}
\thanks{H.~Zhang is with Department of Electronics Engineering, Peking University, Beijing 100871, China, and also with Department of Electrical and Computer Engineering, University of Houston, Houston, TX 77004, USA (email: hongliang.zhang92@gmail.com).}

}

\maketitle

\vspace{-0.5cm}

\begin{abstract}
In the current unmanned aircraft systems~(UASs) for sensing services, unmanned aerial vehicles~(UAVs) transmit their sensory data to terrestrial mobile devices over the unlicensed spectrum. However, the interference from surrounding terminals is uncontrollable due to the opportunistic channel access. In this paper, we consider a cellular Internet of UAVs to guarantee the Quality-of-Service~(QoS), where the sensory data can be transmitted to the mobile devices either by UAV-to-Device (U2D) communications over cellular networks, or directly through the base station~(BS). Since UAVs' sensing and transmission may influence their trajectories, we study the trajectory design problem for UAVs in consideration of their sensing and transmission. This is a Markov decision problem~(MDP) with a large state-action space, and thus, we utilize multi-agent deep reinforcement learning~(DRL) to approximate the state-action space, and then propose a multi-UAV trajectory design algorithm to solve this problem. Simulation results show that our proposed algorithm can achieve a higher total utility than policy gradient algorithm and single-agent algorithm.
\end{abstract}

\begin{IEEEkeywords}
UAV-to-Device communications, cellular Internet of UAVs, trajectory design, deep reinforcement learning
\end{IEEEkeywords}

\newpage

\section{Introduction}%
\label{Introduction}

With high mobility and low operational cost, unmanned aerial vehicle~(UAV) is recognized as a powerful facility to provide sensing services~\cite{HJJYG-2018,JCZYRL-2017}, which has found use in a wide range of sensing applications including traffic monitoring~\cite{KGMK-2013}, industrial inspection~\cite{JMJSCR-2013}, precision agriculture~\cite{BKDF-2017}, and fire surveillance~\cite{CZY-2017}. In the current unmanned aircraft systems~(UASs) for sensing services, UAVs transmit their sensory data to terrestrial mobile devices\footnote{These mobile devices can be smartphones, laptops, and dedicated ground-based controllers. Besides, the sensory data are collected by the payloads of UAVs (e.g., cameras or thermometers), which can be further processed on the mobile devices.} over the unlicensed spectrum~\cite{RJW-2013}. However, due to the opportunistic channel access at the media access control~(MAC) layer, the interference from surrounding terminals is uncontrollable, and the Quality-of-Service~(QoS) for sensing services cannot be guaranteed. Therefore, a more reliable network is necessary.

As an effective solution, terrestrial cellular networks can be utilized to provide UAV sensing services with guaranteed QoS, which we refer to as the cellular Internet of UAVs~\cite{3GPP_TR_36_777,HLZH-2019}. In the cellular Internet of UAVs, UAVs first transmit the sensory data to the BS, after which the BS sends the received data to mobile devices. When UAVs fly far from their sensing targets, they may suffer low sensing quality, while if UAVs are far from the BS, it is difficult to transmit their sensory data back to the BS. Therefore, to ensure the QoS, UAVs should jointly take sensing and transmission into consideration in the designing of their trajectories. However, when UAVs are close to their mobile devices, the data rate can be further improved if UAVs can directly transmit the sensory data to proximal mobile devices, namely the UAV-to-Device~(U2D) communications, rather than transmit through cellular communications.

In this paper, we propose to enable U2D communications in the cellular Internet of UAVs. We consider a orthogonal frequency division multiple access~(OFDMA) cellular Internet of UAVs\footnote{A typical application of this network is traffic monitoring. In the application, UAVs are required to record videos from their target roads to collect the road condition information. To realize real-time monitoring, the recorded videos should be transmitted to their corresponding terrestrial control stations simultaneously, which operates over the cellular spectrum to guarantee the QoS.}, where the sensory data can be transmitted to the mobile devices in either the U2D or the cellular mode, and the U2D communications work as an overlay to the cellular network. Since the trajectories of UAVs may influence both sensing and transmission, it is important to design their trajectories. Moreover, UAVs' trajectories are also coupled with their sensing and transmission, and thus, it is challenging to solve the UAV trajectory design problem.

To tackle with this challenge, we first design a joint sensing and transmission protocol to enable the joint sensing and transmission for UAVs in different modes. This protocol can be analyzed by nested bi-level Markov chains~\cite{PX-2017}, in which sensing and transmission processes are formulated as the state transitions in the Markov chains. Therefore, the UAV trajectory design problem can be regarded as a Markov decision problem~(MDP)~\cite{RA-1998,JHLZH-2019}. Since the state-action space is very large, we then analyze this problem using multi-agent deep reinforcement learning~(DRL)~\cite{KMMA-2017,Y-2017}, and propose a multi-UAV trajectory design algorithm based on deep Q-networks~(DQNs) to obtain the optimal policies for UAVs.

In the literature, several works have investigated the cellular Internet of UAVs, which focus on the UAV-to-Network (U2N)~\cite{SHBL2-2018} communications that UAVs could set up links with the BS, and UAV-to-UAV~(U2U)~\cite{SHBL3-2018} communications that multiple UAVs could communicate with each other directly. Specifically, authors in~\cite{SHBL2-2018} investigated a cellular network consisting of one UAV for sensing services, and maximized the energy efficiency in the network by jointly optimizing the UAV's trajectories as well as transmission power. In~\cite{SHBL3-2018}, the authors jointly optimized subchannel allocation and flying speed for a cellular Internet of UAVs, where UAVs could build up U2N links with the BS as well as U2U links with other UAVs. Nevertheless, as an important practical scenario in the cellular Internet of UAVs,
the direct links among UAVs and mobile devices, i.e., U2D communications, are lack of investigation in the current works\footnote{Actually, there still exist several works on the transmission between UAVs and the mobile devices~\cite{SHQKL-2018,MMWCMC-2017}. However, in these works, UAVs are adopted as relays~\cite{SHQKL-2018} or BSs~\cite{MMWCMC-2017}, which are different from our work where UAVs serve as Internet-of-Things (IoT) devices to execute sensing tasks~\cite{ZYNDPN-2020}. In our paper, UAV sensing and transmission are jointly investigated, which is more complicated than the UAV networks without the consideration of UAV sensing.} .

Besides, there also exist works focusing on the configuration of the UAV networks where UAVs are adopted as the moving BSs~(UAV-BSs)~\cite{ISH-2018,FIH-2018,IAH-2019}. Authors in~\cite{ISH-2018} studied the positioning of UAV-BSs in the urban areas, and proposed an advanced channel model in consideration of the urban environment. In~\cite{FIH-2018}, the authors investigaed the strategic deployment of UAV-BSs in the presence of a terrestrial network, in which the amount and the locations of UAV-BSs are investigated.  A spatial network configuration scheme was proposed in~\cite{IAH-2019} to exploit the mobility for an agile wireless network with UAV-BSs. Different from the above works, in our paper, UAVs are utilized to provide sensing services for terrestrial mobile devices, in which multiple UAVs' sensing and transmission should be jointly analyzed.

Furthermore, several works have studied the machine learning~(ML)~\cite{UWC-2019,MWC-2019} in UAV networks. In~\cite{UWC-2019}, the authors jointly investigated energy efficiency, wireless latency, and interference management for a UAV networks leveraging dynamic game, and then designed a deep reinforcement learning algorithm based on echo state network to achieve the subgame perfect Nash equilibrium. Authors in~\cite{MWC-2019} jointly investigated cache management and resource allocation for a UAV network over both licensed and unlicensed bands, and then proposed a distributed ML algorithm based on the liquid state machine to solve this problem. Unlike the above works in which UAVs are utilized as flying infrastructures, e.g., base station or relay, which are different from our work where UAVs serve as aerial Internet-of-Things (IoT) devices to execute sensing
tasks. In this system, UAV sensing and transmission should be jointly investigated, and thus, the ML methods in the current works cannot be adopted directly in our problem.

The main contributions of this paper can be summarized as follows:
\begin{itemize}
  \item We propose overlaying U2D communications in the cellular Internet of UAVs to improve the QoS for UAV sensing services, and design a joint sensing and transmission protocol to enable U2D communications. 
  \item We formulate the UAV trajectory design problem using multi-agent DRL, and then propose a DQN-based multi-UAV trajectory design algorithm to solve this problem.
  \item Simulation results show that our proposed algorithm can achieve a higher utility in the network than policy gradient algorithm and single-agent algorithm.
\end{itemize}

The rest of the paper is organized as follows. In Section~\ref{System Model}, we describe the model of
the U2D communications overlaying cellular network. In Section~\ref{Sense-and-Send Protocol}, we design a joint sensing and transmission protocol to coordinate multiple UAVs performing sensing tasks. Then, we analyze the protocol using nested bi-level Markov chains in Section~\ref{Markov Analysis}. We investigate on the UAV trajectory design problem using multi-agent DRL, and propose a DQN-based multi-UAV trajectory design algorithm in Section~\ref{Trajectory Design using DRL}. In Section~\ref{System Performance Analysis}, we analyze the convergence and the complexity of our proposed algorithm, and present some key properties on mode selection and trajectory design for UAVs. Simulation results are presented in Section~\ref{Simulation Results}. Finally, Section~\ref{Conclusion} concludes the paper.

\section{System Model}%
\label{System Model}
In this section, we first describe the model for a U2D communications overlaying cellular network. Then, the sensing and the transmission models for UAVs are elaborated on, respectively. For clarity, we summarize all the following notations and their definitions in Table~\ref{Notations}.

\begin{table}[!t]
\centering
\scriptsize
\caption{List of Notations} \label{Notations}
\vspace{-3mm}
\begin{tabular}{|c|p{1.65in}|c|p{2.15in}|}
 \hline
 \textbf{Notations} & \textbf{Definitions} & \textbf{Notations} & \textbf{Definitions} \\
 \hline
 \hline
  $N$ & Number of UAVs  & $\mathcal{T}_i$ & Trajectory \\
  \hline
  $K$ & Number of subchannels  & $\Delta$ & Distance between two adjacent points \\
  \hline
  $P^u$ & Transmit power of UAVs  &$v_{max}$ & Maximum flying speed of UAVs \\
  \hline
  $P^b$ & Transmit power of the BS  & $\mathcal {U}_i$ & Utility \\
  \hline
  $N_0$ & Noise power  &  $r_i$ & Reward \\
  \hline
  $f_c$ & Carrier frequency  & $\pi_i$ & Policy of agent \\
  \hline
  $H_0$ & Height of the BS & $\mathcal {S} = \{ \emph{\textbf{s}} \} $ & State space \\
  \hline
  $l_i(\textbf{\emph{x}}_1,\textbf{\emph{x}}_2)$ & Distance between $\textbf{\emph{x}}_1$ and $\textbf{\emph{x}}_2$ & $\mathcal {A}_i = \{ \emph{\textbf{a}}_i \}$  & Action space \\
  \hline
  $\lambda$ & UAV sensing factor & $\mathcal {P}$ & State transition function \\
  \hline
  $t_f$ & Duration of a frame & $\mathcal {R}_i(\emph{\textbf{s}}, \emph{\textbf{a}}_i )$ & Reward function \\
  \hline
  $\mathcal{P}_{ss,i}$ & Successful sensing probability & $\rho$ &Discount factor \\
  \hline
  $\gamma_{a,i}$, $\gamma_{g,i}$ & Signal-to-noise ratio in air-to-ground/ground-to-ground link & $p^u_{ts,i}$, $p^c_{ts,i}$ & The probability that transmission succeeds in U2D/cellular mode within a cycle \\
 \hline
 $\mathcal {L}_{a,i}$, $\mathcal {L}_{g,i}$ & Path loss in air-to-ground/ground-to-ground link  & $\mathbb{P}^u_{ts,i}(t)$, $\mathbb{P}^c_{ts,i}(t)$ & The probability that transmission succeeds in U2D/cellular mode within the $t$-th frame \\
  \hline
$\zeta_i$, $\kappa_i$ & Small-scale fading coefficient in air-to-ground/ground-to-ground link & $\mathbb{P}^u_{ts,i}\{t|\emph{\textbf{I}}(t)\}$, $\mathbb{P}^c_{ts,i}\{t|\emph{\textbf{I}}(t)\}$ & The probability that transmission succeeds in U2D/cellular mode after the $t$-th frame, given the state indicator at the $t$-th frame $\emph{\textbf{I}}(t)$ \\
  \hline
 $R^u_i$, $R^c_i$ & Throughput in U2D/cellular mode & $Q_i(\emph{\textbf{s}}, \emph{\textbf{a}}_i )$ & Q-value function \\
 \hline
  $R_{th}$ & QoS requirement  &  $Q_i$,  $\Theta_i$ & Q network and its weight \\
 \hline
 $\mathcal{P}^{u}_{ts,i}$, $\mathcal{P}^{c}_{ts,i} $ & Successful transmission probability in U2D/cellular mode & $\widehat{Q}_i$, $\Theta_i^-$ & Target network and its weight \\
 \hline
 $T_s$, $T_t$, $T_c$ & Length of sensing part/transmission part/full cycle & $\mathcal {M}$, $M$ & Replay memory and its size \\
 \hline
 $\emph{\textbf{I}}(t)$ & Transmission state indicator & $\mathcal {D}$, $D$& Mini-batch and its size \\
 \hline
 $\textbf{\emph{v}}(t)$ & Subchannel allocation vector & $\tau$ & Update frequency of target networks \\
 \hline
\end{tabular}
\vspace{-7mm}
\end{table}

\subsection{System Description}

\begin{figure}[!t]
\centering
\vspace{-5mm}
\includegraphics[width=6in]{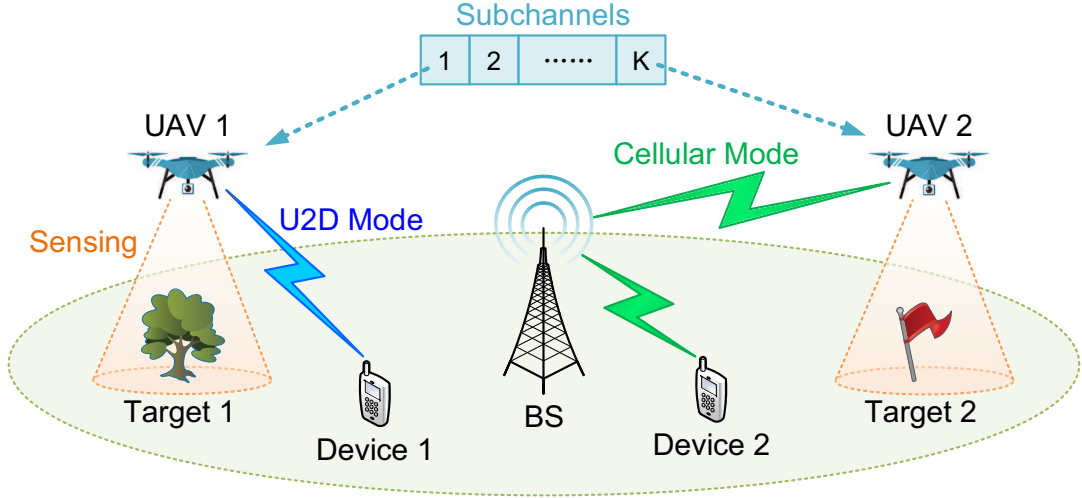}
\vspace{-9mm}
\caption{System model for the U2D communications overlaying cellular network.}
\vspace{-7mm}
\label{system_model}
\end{figure}

As illustrated in Fig.~\ref{system_model}, we consider an OFDMA cellular network with single BS. The system consists of $N$ UAVs performing sensing tasks, which denoted by $\mathcal {N} = \{1,2,...,N\}$. Each UAV is required to sense its target, and then transmit the sensory data to the corresponding mobile device on the ground. There exist two modes to support the data transmissions from a UAV to its mobile device:
\begin{itemize}
  \item \textbf{U2D Mode}: The UAV transmits the sensory data to its corresponding mobile device directly;
  \item \textbf{Cellular Mode}: The data transmissions include two phases. In the first phase, the UAV sends the sensory data to the BS. In the second phase, the BS transmits the received data to the mobile device.
\end{itemize}

We define the time unit for the transmission as frame. To be specific, a UAV in the U2D mode takes a whole frame to send the sensory data to its mobile device directly. However, the two phases in the cellular mode are performed in a time division multiplexing~(TDM) manner. The UAV in the cellular mode sends the sensory data to the BS in the first half frame, and then the BS transmits the received data to the mobile device in the other half frame.

The network owns $K$ orthogonal subchannels, denoted by $\mathcal {K} = \{1,2,...,K\}$, to support the OFDMA transmissions among UAVs and mobile devices. In order to avoid the mutual interference among UAVs, the BS will allocate orthogonal subchannels to different UAVs\footnote{For UAVs in the cellular mode, the two phases are operated over the same subchannel.}. Besides, we assume that the available subchannels are not sufficient to support the data transmissions of all UAVs simultaneously, and thus, we have $K \leq N$.

In the system, the locations of the BS, UAVs, mobile devices and sensing targets are specified by 3D cartesian coordinates. Specifically, the location of the BS is denoted as $\textbf{\emph{x}}_0 = (0, 0, H_0)$ with $H_0$ being its height. The location of the $i$-th UAV is denoted as $\textbf{\emph{x}}_i = (x_i, y_i, h_i)$, and its corresponding mobile device locates at $\textbf{\emph{x}}^d_i = (x^d_i, y^d_i, 0)$. Moreover, the coordinates of the sensing target for $i$-th UAV can be given by $\textbf{\emph{x}}^t_i = (x^t_i, y^t_i, 0)$.


\subsection{UAV Sensing}
We utilize the probabilistic sensing model in \cite{VI-2017,ARAS-2013} to evaluate the sensing qualities of UAVs. Specifically, the successful sensing probability~(SSP) for a UAV is an exponential function of the distance between the UAV and its target. When the $i$-th UAV senses its target for a frame, the SSP can be given as
\begin{equation}\label{SSP}
  \mathcal{P}_{ss,i} = e^{-\lambda t_f l_i},
\end{equation}
where $\lambda$ is the parameter evaluating the sensing performance, $t_f$ is the duration of a frame, and $l_i = \|\textbf{\emph{x}}_i - \textbf{\emph{x}}^t_i\|_2$ denotes the distance from the $i$-th UAV to its target.

Due to the limited computational capability, each UAV cannot figure out whether the sensing is successful or not by itself. Instead, mobile devices can judge whether the sensory data received from UAVs is valid or not. However, we can still evaluate the sensing qualities of UAVs by calculating their SSPs based on equation~(\ref{SSP}).

\subsection{UAV Transmission}
\label{Transmission Model}

Since UAVs fly at high altitudes, the line-of-sight~(LoS) components usually exist in the transmissions from UAVs to their mobile devices and the BS on the ground~\cite{SJHL-2019,HLZ-2020}. Thus, the channel characteristic of the air-to-ground communications is different from that in traditional terrestrial communications. For UAVs in the U2D mode, the sensory data is sent on an air-to-ground channel. However, when UAVs select the cellular mode, the sensory data is transmitted on an air-to-ground channel in the first phase, and then on a terrestrial channel in the second phase. In what follows, we will analyze the transmissions in the U2D and the cellular modes, respectively.

\subsubsection{U2D Mode}
We adopt the air-to-ground channel model in \cite{3GPP_TR_36_777} to evaluate the direct data transmissions from UAVs to their mobile devices. Then, the received signal-to-noise ratio~(SNR) from the $i$-th UAV to its mobile devices can be expressed as
\begin{equation}\label{U2D_SNR}
\gamma_{a,i}(\textbf{\emph{x}}^d_i,\textbf{\emph{x}}_i) = \frac{P^u \zeta_i}{N_0 \cdot 10^{\mathcal {L}_{a,i}/10}}
\end{equation}
where $P^u$ is the transmit power of UAVs, $N_0$ denotes the power of noise, $\mathcal {L}_{a,i}$ denotes the air-to-ground path loss, and $\zeta_i$ is the small-scale fading coefficient.

To calculate $\mathcal {L}_{a,i}$ and $\zeta_i$, we consider both the LoS and the none LoS~(NLoS) components. According to~\cite{3GPP_TR_36_777}, the LoS probability for the $i$-th UAV can be calculated by
\begin{equation}
\mathcal{P}_{LoS,i}=
\left\{
\begin{aligned}
1,\qquad\qquad\qquad\quad&\quad{d_{2D,i} \leq d_c},\\
{d_{c}}/{d_{2D,i}} + \left( 1 - d_{c}/d_{2D,i}\right)
e^{-d_{2D,i}/{p_0}},&\quad{d_{2D,i}> d_c},\\
\end{aligned}
\right.
\end{equation}
where $d_c = \textmd{max}\{ 294.05\textmd{lg}(h_i)-432.94,18\}$, $d_{2D,i}=\sqrt{(x^d_i-x_i)^2 + (y^d_i-y_i)^2}$, and $p_0 = 233.98 \textmd{lg}(h_i) - 0.95$. Then, the NLoS probability for the $i$-th UAV is $\mathcal{P}_{NLoS,i} = 1 - \mathcal{P}_{LoS,i}$. Based on~\cite{3GPP_TR_36_777}, we can calculate the LoS path loss and the NLoS path loss for the $i$-th UAV, denoted by $\mathcal {L}_{LoS,i}$ and $\mathcal {L}_{NLoS,i}$, respectively. In addition, the LoS small-scale fading coefficient for the $i$-th UAV $\zeta_{LoS,i}$ obeys Rice distribution, and the NLoS small-scale fading coefficient for the $i$-th UAV $\zeta_{NLoS,i}$ obeys Rayleigh distribution. More details on the calculations of the path loss and the small-scale fading coefficient are included in~\cite{3GPP_TR_36_777}.

Based on the received SNR, we can calculate the throughput for the $i$-th UAV in the U2D mode as $R^{u}_{i} = \textmd{log}_2(1+\gamma_{a,i}(\textbf{\emph{x}}^d_i,\textbf{\emph{x}}_i))$. Considering the QoS of transmissions, we define that a transmission is \emph{successful} when the throughput exceed a threshold, denoted by $R_{th}$. Then, the successful transmission probability~(STP) for the $i$-th UAV in the U2D mode can be calculated by $ \mathcal{P}^{u}_{ts,i} = \mathcal{P}_{a,i}(\textbf{\emph{x}}^d_i,\textbf{\emph{x}}_i,R_{th})$, where $\mathcal{P}_{a,i}(\cdot)$ is given by
\begin{equation}\label{U2D_STP}
\mathcal{P}_{a,i}(\textbf{\emph{x}}^d_i,\textbf{\emph{x}}_i,\beta) = \mathcal{P}_{LoS,i} \left( 1-F_{ri}(\chi_{LoS,i}) \right) + \mathcal{P}_{NLoS,i} \left( 1-F_{ra}(\chi_{NLoS,i}) \right),
\end{equation}
in which $\chi_{LoS,i} = ({ (2^{\beta}-1) N_0 \cdot 10^{0.1\mathcal {L}_{LoS,i}}})/{P^u}$, and $\chi_{NLoS,i} = ({(2^{\beta}-1) N_0 \cdot 10^{0.1\mathcal {L}_{NLoS,i}}})/{P^u}$. Here, $F_{ri}(x) = 1 - Q_1(\sqrt{2K_{ri}},x\sqrt{2(K_{ri}+1)})$ is the cumulative distribution function~(CDF) of the Rice distribution with $\Omega = 1$~\cite{S-1944}, $F_{ra}(x) = 1 - e^{-x^2/2}$ is the CDF of the Rayleigh distribution with unit variance, and $Q_1(x)$ is the Marcum Q-function of order 1~\cite{J-1950}.

\subsubsection{Cellular Mode}
In the first phase, the transmissions from the $i$-th UAV to the BS can also be evaluated by the air-to-ground channel model. Therefore, we can express the received SNR at the BS as $\gamma_{a,i}(\textbf{\emph{x}}_0,\textbf{\emph{x}}_i)$, where $\gamma_{a,i}(\cdot)$ is given in~(\ref{U2D_SNR}). As for the second phase, we utilize the free space propagation path loss model with Rayleigh fading~\cite{WCPP} to characterize the terrestrial transmissions. Thus, the received SNR at the mobile device can be expressed as
\begin{equation}\label{cellular_SNR2}
\gamma_{g,i}(\textbf{\emph{x}}^d_i,\textbf{\emph{x}}_0) = \frac{P^b \kappa_i}{N_0 \cdot 10^{\mathcal {L}_{g,i}/10}}
\end{equation}
where $P^b$ is the transmit power of the BS, $\mathcal {L}_{g,i}[\textmd{dB}] = 20 \textmd{lg}(d_i) + 20 \textmd{lg}(f_c) + 32.45$ is the path loss from the BS to the $i$-th mobile device, $d_i[\textmd{m}]= \|\textbf{\emph{x}}^d_i - \textbf{\emph{x}}_0\|_2$ denotes the distance between the BS and the mobile device, $f_c[\textmd{GHz}]$ is the carrier frequency, and $\kappa_i$ denotes the small-scale fading following Rayleigh distribution with unit variance.

The throughput of $i$-th UAV in the cellular mode is given by $R^c_{i} = \frac{1}{2} \min\left\{ R^c_{a,i}, R^c_{g,i} \right\}$, in which $R^c_{a,i} = \textmd{log}_2(1+\gamma_{a,i}(\textbf{\emph{x}}_0,\textbf{\emph{x}}_i))$ and $R^c_{g,i} = \textmd{log}_2(1+\gamma_{g,i}(\textbf{\emph{x}}^d_i, \textbf{\emph{x}}_0))$ denote throughput of the transmissions in the first and the second phases. To achieve successful transmissions, $R^c_i$ should exceed the QoS requirement $R_{th}$, which is equivalent to that both $R^c_{a,i}$ and $R^c_{g,i}$ should be larger than $2R_{th}$. Therefore, we can calculate the STPs for the first and the second phases by $\mathcal{P}_{a,i}(\textbf{\emph{x}}_0, \textbf{\emph{x}}_i, 2R_{th})$ and $\mathcal{P}_{g,i}(\textbf{\emph{x}}^d_i,\textbf{\emph{x}}_0,2R_{th})$, respectively. Here, $\mathcal{P}_{a,i}(\cdot)$ is given in~(\ref{U2D_STP}), and $\mathcal{P}_{g,i}(\cdot)$ is expressed by
\begin{equation}\label{STP_ground}
  \mathcal{P}_{g,i}(\textbf{\emph{x}}^d_i,\textbf{\emph{x}}_0,\beta) = 1-F_{ra}(\chi_{g,i}),
\end{equation}
where $\chi_{g,i} = ((2^{\beta}-1) N_0 \cdot 10^{0.1\mathcal {L}_{g,i}})/{P^b}$, and $F_{ra}(x) = 1 - e^{-x^2/2}$. Since the channel state of these two phases are independent with each other, we can calculate the STP for the $i$-th UAV in the cellular mode by $\mathcal{P}^{c}_{ts,i} = \mathcal{P}_{a,i}(\textbf{\emph{x}}_0, \textbf{\emph{x}}_i, 2R_{th})\mathcal{P}_{g,i}(\textbf{\emph{x}}^d_i,\textbf{\emph{x}}_0,2R_{th})$.

\section{Joint Sensing and Transmission Protocol}
\label{Sense-and-Send Protocol}

In this section, we design a joint sensing and transmission protocol to coordinate multiple UAVs performing sensing tasks simultaneously. We first show the joint sensing and transmission cycles which consists of sensing parts as well as  transmission parts. Then, the subchannel allocation mechanism is presented.

\subsection{Joint Sensing and Transmission Cycle}

\begin{figure}[!t]
\centering
\vspace{-5mm}
\includegraphics[width=6in]{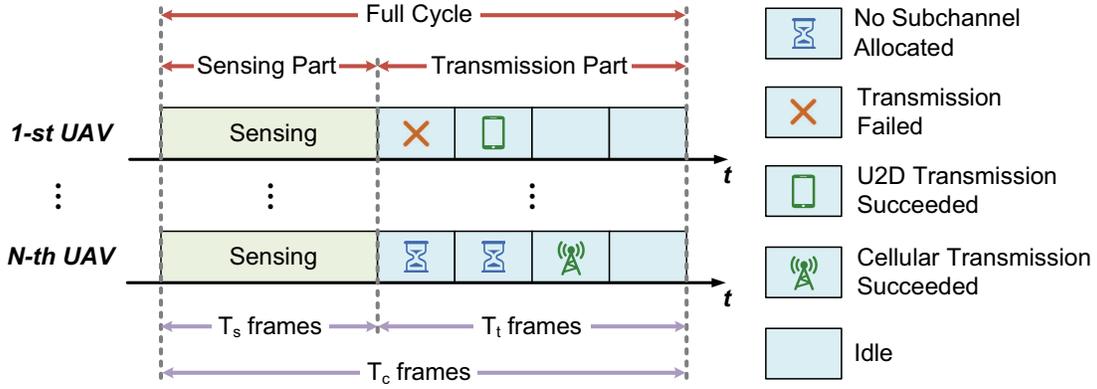}
\vspace{-8mm}
\caption{The joint sensing and transmission protocol.}
\vspace{-7mm}
\label{protocol}
\end{figure}

We assume that the UAVs perform sensing tasks in a synchronized iterative manner, which is characterized by \emph{cycles}. To be specific, in each cycle, UAVs first sense their sensing targets and then transmit the sensory data to their mobile devices. In Fig.~\ref{protocol}, we illustrates a cycle of the joint sensing and transmission protocol, which contains $T_c$ frames. Each cycle is divided into the sensing part with $T_s$ frames and the transmission part with $T_t$ frames. In the sensing part, each UAV should sense its target and collect the sensory data continuously for $T_s$ frames to ensure the sensing quality.

In the transmission part, each UAV attempts to send the sensory data to its mobile device if the BS allocates a subchannel to it. Specifically, at the beginning of each frame, all UAVs should report their locations to the BS, and then the BS preforms the subchannel allocation based on the received information. If a UAV is not allocated to a subchannel, the sensory data cannot be transmitted in this frame. On the other hand, the UAV with allocated subchannel attempts to transmit the sensory data to the mobile device in the transmission mode which satisfies the QoS requirement. When both the U2D and the cellular modes can support successful transmissions, the UAV will select the mode with a higher throughput. However, when neither one of these two modes can satisfy the QoS requirement, the transmission will fail in this frame.

To be specific, there are five possible situations for each UAV in the transmission part:
\begin{itemize}
  \item \textbf{No Subchannel Allocated}: No subchannel is allocated to the UAV, and thus, the UAV cannot transmit data in this frame. It will wait for the BS to assign a subchannel to it in the following frames;
  \item \textbf{Transmission Failed}: A subchannel is allocated to the UAV. However, the transmission
        is unsuccessful due to the QoS requirement. Thus, the UAV will attempt to transmit the sensory data again in the following frames;
  \item \textbf{U2D Transmission Succeeded}: A subchannel is allocated to the UAV, and the UAV successfully transmits the sensory data in the U2D mode;
  \item \textbf{Cellular Transmission Succeeded}: A subchannel is allocated to the UAV, and the UAV successfully transmits the sensory data in the cellular mode;
  \item \textbf{Idle}: After a successful transmission, the UAV will keep idle until the next cycle begins.
\end{itemize}

Finally, at the end of the transmission part, the BS broadcasts to inform UAVs of the locations of all the UAVs. By this means, each UAV can obtain the locations of other UAVs, and then decide its trajectory in the next cycle. In practice, the information exchange between UAVs and the BS can be operated through the control channels~\cite{RJW-2013}, which is separated from the sensory data transmission over the data channels\footnote{The overhead for information exchange is trivial, which can be covered by the control channels owned by the system. We assume that a UAV needs $\omega$ messages to describe its location. Thus, the overhead for information exchange is $N \omega$ messages, in which $N$ is the number of UAVs.}.

\subsection{Subchannel Allocation Mechanism}
\label{Subchannel Allocation Mechanism}

In the transmission part, the available subchannels may not be sufficient to support all the UAVs for data transmission simultaneously. Therefore, the BS will allocate the subchannels to UAVs according to the following mechanism.

In each frame, the BS schedules $K$ subchannels to maximize the sum of STPs of all UAVs. Here, the STP for the $i$-th UAV is defined by $\mathcal{P}_{ts,i} = 1-(1-\mathcal{P}^{u}_{ts,i})(1-\mathcal{P}^{c}_{ts,i})$, i.e., the probability that the sensory data can be transmitted successfully in either the U2D or the cellular mode. Besides, when a UAV has already successfully transmitted its sensory data before a frame, the BS is without the need of allocating subchannel to this UAV in the current and future frames any more. Since different UAVs are required to utilize orthogonal subchannels, it is equivalent to that the BS allocates $K$ subchannels to the UAVs with the highest $K$ STPs which have not successfully transmitted their data.

More explicitly, we denote the transmission state indicators of the UAVs in the $t$-th frame of the $c$-th cycle as $\textbf{\emph{I}}^{(c)}(t) = \left( I_1^{(c)}(t), I_2^{(c)}(t),...,I_N^{(c)}(t) \right)$, in which the value of $I_i^{(c)}(t)$ is given by
\begin{equation}
I_i^{(c)}(t)=
\left\{
\begin{aligned}
0,&\quad \text{if the transmission has failed for the $i$-th UAV},\\
1,&\quad \text{if the U2D transmission has succeeded for the $i$-th UAV},\\
2,&\quad \text{if the cellular transmission has succeeded for the $i$-th UAV}.
\end{aligned}
\right.
\end{equation}
Based on the above notations, we introduce the subchannel allocation vector in the $t$-th frame of the $c$-th cycle as $\textbf{\emph{v}}^{(c)}(t) = \left( v_1^{(c)}(t), v_2^{(c)}(t),...,v_N^{(c)}(t) \right)$, where
\begin{equation}\label{subchannel allocation vector}
v_i^{(c)}(t)=
\left\{
\begin{aligned}
1,&\quad {\mathcal{P}^{(c)}_{ts,i}(t) \left[1-\textmd{sgn}\left(I^{(c)}_i(t)\right)\right] \geq \left\{ \emph{\textbf{P}}^{(c)}_{ts}(t) \left[\textbf{1}-\textmd{sgn}\left(\emph{\textbf{I}}^{(c)}(t)\right)\right] \right\}_K  },\\
0,&\quad {\textmd{otherwise}}.\\
\end{aligned}
\right.
\end{equation}
Here, $\textmd{sgn}(\cdot)$ is the sign function, $\mathcal{P}^{(c)}_{ts,i}(t)$ denotes the STP for the $i$-th UAV in the $t$-th frame of the $c$-th cycle, and $\left\{ \emph{\textbf{P}}^{(c)}_{ts}(t) \left[\textbf{1}-\textmd{sgn}\left(\emph{\textbf{I}}^{(c)}(t)\right)\right] \right\}_K $ denotes the $K$-th largest STPs among the UAVs which have not succeeded in sending the sensory data before the $t$-th frame.

Under the above subchannel allocation mechanism, each of the UAVs not succeeding in data transmission has the incentive to compete with others by designing its trajectory and selecting its transmission mode. It is worth mentioning that the subchannel allocation result are determined by the \emph{ranking} of all UAVs' STPs, rather than the \emph{absolute values} of them. Therefore, each UAV should take other UAVs' locations into consideration when it optimizes its trajectory and transmission mode.

\section{Joint Sensing and Transmission Analysis}
\label{Markov Analysis}

In this section, we analyze the joint sensing and transmission protocol by specifying the state transitions of UAVs in nested bi-level Markov chains. Specifically, the outer and inner Markov chains depict the state transitions in the sensing and transmission processes, respectively. For simplicity, we omit the superscript cycle index $c$ in the following notations.

\subsection{Outer Markov Chain for UAV Sensing}
\begin{figure}[!t]
\centering
\vspace{-5mm}
\includegraphics[width=6in]{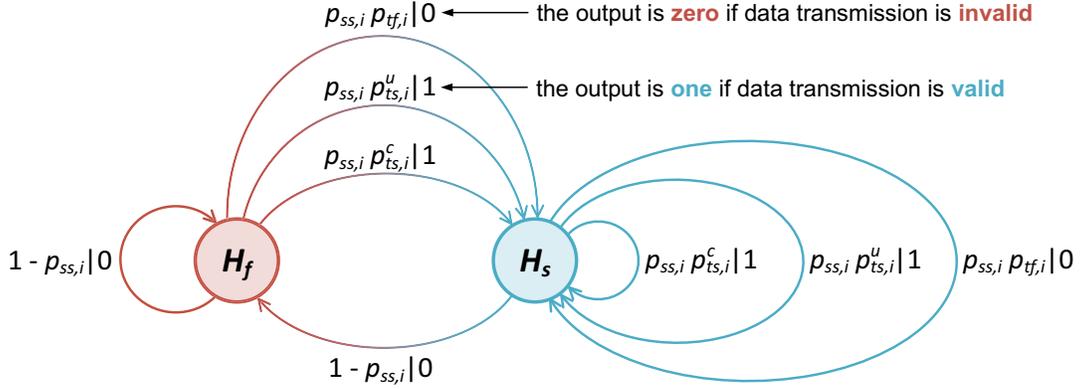}
\vspace{-10mm}
\caption{The outer Markov chain.}
\vspace{-7mm}
\label{outer_Markov_Chain}
\end{figure}

The sensing part in the joint sensing and transmission protocol can be evaluated by an outer Markov chain, where state transitions take place in each cycle. As shown in Fig.~\ref{outer_Markov_Chain}, a UAV has two sensing states in each cycle, i.e., state $H_f$ denoting that the sensing fails, and state $H_s$ denoting that the sensing succeeds. Define the SSP of the $i$-th UAV in a cycle as $p_{ss,i}$. Then, the UAV's state transits to $H_s$ with probability $p_{ss,i}$, and transits to $H_f$ with probability $1-p_{ss,i}$.

Since the location of a UAV changes slightly during each frame, we assume that the location of each UAV is fixed within each frame. Then, we can express the location of the $i$-th UAV in a cycle as $\textbf{\emph{x}}_i(t) = \left(x_i(t), y_i(t), h_i(t)\right)$, $t\in[1,T_c]$. Likewise, the distance between the $i$-th UAV and its target in a cycle can be expressed as $l_i(t) = \|\textbf{\emph{x}}_i(t) - \textbf{\emph{x}}^t_i\|_2$, $t\in[1,T_c]$. Moreover, we assume that each UAV moves with a uniform speed and direction in each cycle. Based on above assumptions, the location of the $i$-th UAV at the $t$-th frame in a cycle can be given by
\begin{equation}
\textbf{\emph{x}}_i(t) = \textbf{\emph{x}}'_i(T_c) + \frac{t}{T_c} [ \textbf{\emph{x}}_i(T_c) - \textbf{\emph{x}}'_i(T_c) ], \quad t\in[1,T_c].
\end{equation}
where $\textbf{\emph{x}}'_i(T_c)$ denotes the location of the $i$-th UAV at the last frame in the last cycle. According to the protocol in Section~\ref{Sense-and-Send Protocol}, a UAV is required to continuously sense its target for $T_s$ frames within a cycle. Thus, we can calculate the SSP of the $i$-th UAV in a cycle, i.e., $p_{ss,i}$, by
\begin{equation}
p_{ss,i} = \prod_{t=1}^{T_s} \mathcal{P}_{ss,i}(t) =  \prod_{t=1}^{T_s} e^{-\lambda t_f l_i(t)}.
\end{equation}

After sensing the target, a UAV may have three possible situations at the end of the transmission part: 1)~The transmission has failed; 2)~The U2D transmission has succeeded; 3)~The cellular transmission has succeeded. The probabilities of these three situations for the $i$-th UAV are denoted by $p_{tf,i}$, $p^u_{ts,i}$, and $p^c_{ts,i}$, respectively. The values of these probabilities can be calculated by the inner Markov chain, which will be introduced in the next subsection.

We define the data transmission of a UAV is \emph{valid} when the UAV successfully senses its target as well as transmits the sensory data to its mobile device. Then, we can express the valid transmission probability~(VTP) for the $i$-th UAV in the U2D and the cellular modes by $p_{ss,i}p^u_{ts,i}$ and $p_{ss,i}p^c_{ts,i}$, respectively. Besides, the probability that the $i$-th UAV successfully senses its target but fails to transmit the sensory data to its mobile device can be given by $p_{ss,i}p_{tf,i}$. To characterize whether a transmission is valid or not, we define the output of each state transition, which is denoted on the right side of each transition probability in Fig.~\ref{outer_Markov_Chain}. The output is one when the data transmission is valid, otherwise, the output is zero.

\subsection{Inner Markov Chain for UAV Transmission}

In this part, we introduce an inner Markov chain to formulate the transmission part in the joint sensing and transmission protocol. Since the general state transmission diagram is complicated, we illustrate the inner Markov chain via an example with the number of subchannels $K = 1$, the number of UAVs $N = 3$, and the number of frames in each transmission part $T_t = 2$. We take the $1$-st UAV as an example and illustrate its state transition diagram in Fig.~\ref{inner_Markov_Chain}.

\begin{figure}[!t]
\centering
\vspace{-3mm}
\includegraphics[width=6in]{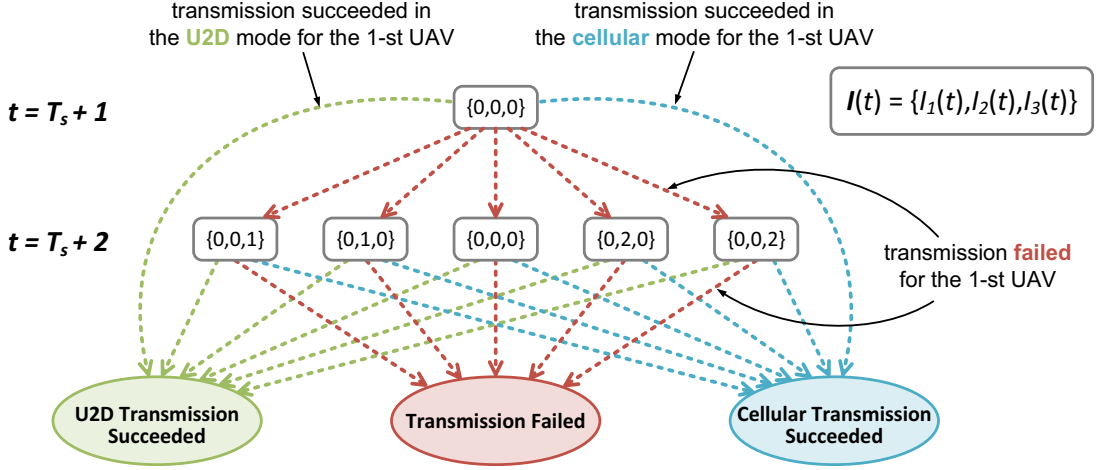}
\vspace{-7mm}
\caption{The inner Markov chain of the $1$-st UAV with $K = 1$, $N = 3$ and $T_t = 2$.}
\vspace{-7mm}
\label{inner_Markov_Chain}
\end{figure}

As shown in Fig.~\ref{inner_Markov_Chain}, in the first frame of the transmission part, i.e., $t = T_s + 1$, none of UAVs has transmitted its sensory data, and thus, we have $\textbf{\emph{I}}(T_s + 1) = \{0,0,0\}$. During this frame, all UAVs should compete with each other for subchannels. In the next frame, if the $1$-st UAV successfully transmits its sensory data in the U2D or the cellular mode, the state will transit to \emph{U2D Transmission Succeeded} or \emph{Cellular Transmission Succeeded}, with probability $\mathbb{P}^u_{ts,1}(T_s + 1)v_1(T_s + 1)$ or $\mathbb{P}^c_{ts,1}(T_s + 1)v_1(T_s + 1)$, accordingly. Here, $v_1(\cdot)$ is the subchannel allocation vector for the $1$-st UAV, which is given in~(\ref{subchannel allocation vector}). Besides, $\mathbb{P}^u_{ts,1}(\cdot)$ and $\mathbb{P}^c_{ts,1}(\cdot)$ can be calculated by the following proposition.

\begin{proposition}\label{proposition1}
When the $i$-th UAV is allocated to a subchannel, it will successfully transmit its sensory data in the U2D or the cellular mode with probability $\mathbb{P}^u_{ts,i}$ or $\mathbb{P}^c_{ts,i}$, in which
\begin{align}
\mathbb{P}^u_{ts,i} &= \mathcal{P}^{u}_{ts,i}\left(1-\mathcal{P}^{c}_{ts,i}\right) + \mathcal{P}^{u}_{ts,i} \mathcal{P}^{c}_{ts,i} \cdot \mathcal{P}\left\{R^u_{i} \geq R^c_{i} \right\}, \\
\mathbb{P}^c_{ts,i} &= \left(1-\mathcal{P}^{u}_{ts,i}\right)\mathcal{P}^{c}_{ts,i} + \mathcal{P}^{u}_{ts,i} \mathcal{P}^{c}_{ts,i} \cdot \mathcal{P}\left\{R^u_{i} < R^c_{i} \right\},
\end{align}
where $\mathcal{P}\left\{R^u_{i} \geq R^c_{i} \right\} =  1-\int_{-\infty}^\infty \Psi_i(y) f_{R_i^u}\left( y \right)dy$, $\mathcal{P}\left\{R^u_{i} < R^c_{i} \right\} =  \int_{-\infty}^\infty \Psi_i(y) f_{R_i^u}\left( y \right)dy$, with $\Psi_i(y) = \mathcal{P}_{a,i}(\textbf{x}_0, \textbf{x}_i, 2y) \mathcal{P}_{g,i}(\textbf{x}^d_i,\textbf{x}_0,2y)$, $f_{R_i^u}\left( y \right) = \mathcal{P}_{LoS,i} \cdot f_{R_{LoS,i}^u}\left( y \right) + \mathcal{P}_{NLoS,i} \cdot f_{R_{NLoS,i}^u}\left( y \right)$, $f_{R_{LoS,i}^u}\left( y \right) = \frac{2\ln2(K_{ri}+1)(2^y - 1)2^y }{\xi^2_{LoS,i}} \exp\left\{ -\frac{(K_{ri}+1)(2^y - 1)^2 + K_{ri}\xi_{LoS,i}^2}{\xi_{LoS,i}^2} \right\}\cdot
I_0 \left( \frac{ 2\sqrt{(K_{ri}+1)K_{ri}} \left(2^y - 1\right)}{\xi _{LoS,i}}\right)$, \\ $f_{R_{NLoS,i}^u}\left( y \right) = \frac{\ln2\left(2^y - 1\right)2^y}{\xi^2_{NLoS,i}} \exp\left\{ -\frac{\left(2^y - 1\right)^2}{2\xi^2_{NLoS,i}} \right\}$, $\xi_{LoS,i}= \frac{P^u}{N_0 \cdot 10^{\mathcal {L}_{LoS,i}/10}}$, and $\xi_{NLoS,i}=\frac{P^u}{N_0 \cdot 10^{\mathcal {L}_{NLoS,i}/10}}$.
\end{proposition}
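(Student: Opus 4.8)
The plan is to rewrite ``the $i$-th UAV's U2D transmission succeeds, given it holds a subchannel'' as an event in the random link rates $R^u_i$ and $R^c_i$, and then evaluate its probability. By the protocol of Section~\ref{Sense-and-Send Protocol}, the UAV transmits in the U2D mode exactly when that mode meets the QoS requirement ($R^u_i \geq R_{th}$) and, in addition, either the cellular mode fails the requirement ($R^c_i < R_{th}$) or, when both modes pass, $R^u_i \geq R^c_i$; and whenever U2D is used and passes the QoS test, the transmission is by definition successful. Since $R^c_i < R_{th} \leq R^u_i$ already forces $R^u_i > R^c_i$, this success event is the disjoint union of $\{R^u_i \geq R_{th}\}\cap\{R^c_i < R_{th}\}$ and $\{R^u_i \geq R_{th}\}\cap\{R^c_i \geq R_{th}\}\cap\{R^u_i \geq R^c_i\}$. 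The air-to-ground U2D link (which fixes $R^u_i$) carries channel randomness independent of the UAV--BS and BS--device links (which fix $R^c_i$), so the first event has probability $\mathcal{P}^u_{ts,i}(1-\mathcal{P}^c_{ts,i})$ in view of the STP identities $\mathcal{P}^u_{ts,i}=\mathcal{P}\{R^u_i\geq R_{th}\}$ and $\mathcal{P}^c_{ts,i}=\mathcal{P}\{R^c_i\geq R_{th}\}$ of Section~\ref{Transmission Model}; for the second event, treating the rate ordering $\{R^u_i \geq R^c_i\}$ as independent of the joint QoS-satisfaction event gives $\mathcal{P}^u_{ts,i}\mathcal{P}^c_{ts,i}\,\mathcal{P}\{R^u_i \geq R^c_i\}$. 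Adding the two terms yields the stated $\mathbb{P}^u_{ts,i}$; repeating the argument with the two modes interchanged and using $\mathcal{P}\{R^u_i < R^c_i\}=1-\mathcal{P}\{R^u_i \geq R^c_i\}$ gives $\mathbb{P}^c_{ts,i}$. As a consistency check, $\mathbb{P}^u_{ts,i}+\mathbb{P}^c_{ts,i}+(1-\mathcal{P}^u_{ts,i})(1-\mathcal{P}^c_{ts,i})=1$, i.e.\ U2D-success, cellular-success, and the event that neither mode works partition the outcomes.

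It then remains to evaluate $\mathcal{P}\{R^u_i \geq R^c_i\}$ and the density $f_{R^u_i}$. For the former, conditioning on $R^u_i$ and using the independence of the two modes gives $\mathcal{P}\{R^u_i \geq R^c_i\}=\int_{-\infty}^{\infty}\mathcal{P}\{R^c_i \leq y\}\,f_{R^u_i}(y)\,dy=1-\int_{-\infty}^{\infty}\mathcal{P}\{R^c_i \geq y\}\,f_{R^u_i}(y)\,dy$. Because $R^c_i=\frac{1}{2}\min\{R^c_{a,i},R^c_{g,i}\}$, the event $\{R^c_i \geq y\}$ equals $\{R^c_{a,i}\geq 2y\}\cap\{R^c_{g,i}\geq 2y\}$, and the two cellular phases use independent channels, so $\mathcal{P}\{R^c_i \geq y\}=\mathcal{P}_{a,i}(\textbf{\emph{x}}_0,\textbf{\emph{x}}_i,2y)\,\mathcal{P}_{g,i}(\textbf{\emph{x}}^d_i,\textbf{\emph{x}}_0,2y)=\Psi_i(y)$, with the closed forms furnished by~(\ref{U2D_STP}) and~(\ref{STP_ground}). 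This reproduces the stated expression for $\mathcal{P}\{R^u_i \geq R^c_i\}$.

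For $f_{R^u_i}$, the air-to-ground channel is LoS with probability $\mathcal{P}_{LoS,i}$ and NLoS otherwise, so by total probability $f_{R^u_i}=\mathcal{P}_{LoS,i}f_{R^u_{LoS,i}}+\mathcal{P}_{NLoS,i}f_{R^u_{NLoS,i}}$. On the LoS event, $R^u_{LoS,i}=\log_2(1+\xi_{LoS,i}\zeta_{LoS,i})$ with $\zeta_{LoS,i}$ Rician of Rice factor $K_{ri}$ and $\Omega=1$; fixing $\Omega=1$ determines the Rician parameters in terms of $K_{ri}$, whence $f_{\zeta_{LoS,i}}(r)=2(K_{ri}+1)r\exp\{-(K_{ri}+1)r^2-K_{ri}\}\,I_0\big(2\sqrt{K_{ri}(K_{ri}+1)}\,r\big)$. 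The monotone substitution $r=(2^y-1)/\xi_{LoS,i}$, with Jacobian $2^y\ln 2/\xi_{LoS,i}$, then produces the stated $f_{R^u_{LoS,i}}$, and the identical computation with $\zeta_{NLoS,i}$ Rayleigh ($f(r)=re^{-r^2/2}$) produces $f_{R^u_{NLoS,i}}$. I expect this last step to be the main obstacle: pinning down the Rician density under the $\Omega=1$ normalization so that the change of variables reproduces exactly the $I_0$ term and the $(2^y-1)^2$ in the exponent is where the bookkeeping is heaviest, whereas the event decomposition and the conditioning identities above are routine.
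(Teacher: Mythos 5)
Your proposal is correct and follows essentially the same route as the paper's Appendix~A: the same two-case decomposition of the mode-selection event, the same conditioning on $R^u_i$ to reduce $\mathcal{P}\{R^u_i < R^c_i\}$ to $\int\Psi_i(y)f_{R^u_i}(y)\,dy$ via the $\min$-structure of $R^c_i$ and phase independence, and the same change of variables applied to the Rice (with $\Omega=1$, $K_{ri}=A^2/2\sigma^2$) and Rayleigh densities to obtain $f_{R^u_{LoS,i}}$ and $f_{R^u_{NLoS,i}}$. The only difference is that you explicitly flag the independence of the rate-ordering event from the joint QoS event as an assumption in the second term, which the paper uses silently.
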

\begin{proof}
See Appendix~\ref{Proof1}.
\end{proof}

However, if the $1$-st UAV fails to transmit the sensory data to its mobile device in the first frame,  $\textbf{\emph{I}}(t)$ will transit into other states in the second frame, which is determined by the transmission states of other UAVs. For instance, when the U2D transmission for the $2$-nd UAV succeeds in the first frame, we have $\textbf{\emph{I}}(T_s + 2) = \{0,1,0\}$. As the $2$-nd UAV has succeeded in data transmission, it will quit the competition for subchannels in the following frames. Consequently, the $1$-st UAV will only face one competitor, namely the $3$-rd UAV, in the second frame. In other words, the $1$-st UAV will have a larger probability to win a subchannel than before. Finally, in the last frame of the cycle, namely $t = T_c$, the transmission state of the $1$-st UAV will transit into \emph{Transmission Failed} if it does not transmit the sensory data successfully.

Based on the above assumptions, we can obtain the values of $p_{tf,i}$, $p^u_{ts,i}$ and $p^c_{ts,i}$ in the outer Markov chain by calculating the absorbing probabilities of the states \emph{Transmission Failed}, \emph{U2D Transmission Succeeded}, and \emph{Cellular Transmission Succeeded} in the inner Markov chain, respectively. Since the absorbing probabilities are difficult to be derived, we adopt the recursive algorithm in~\cite{JHL-2018} to obtain the numerical results.

Specifically, given the state at the $t$-th frame $\emph{\textbf{I}}(t)$, $t\in[T_s+1,T_c]$, we define the probabilities that the $i$-th UAV successfully transmits its sensory data \emph{after} the $t$-th frame as $\mathbb{P}^u_{ts,i}\{t|\emph{\textbf{I}}(t)\}$ and $\mathbb{P}^c_{ts,i}\{t|\emph{\textbf{I}}(t)\}$, respectively. For these two probabilities, the following two equations hold:
\begin{equation}\label{Pr_tu i}
\mathbb{P}^u_{ts,i}\{t|\emph{\textbf{I}}(t)\} = \mathbb{P}^u_{ts,i}(t)v_i(t) + \sum_{\emph{\textbf{I}}(t+1), \emph{I}_i(t+1)=0}
\mathcal{P}\{\emph{\textbf{I}}(t+1)|\emph{\textbf{I}}(t)\} \mathbb{P}^u_{ts,i}\{t+1|\emph{\textbf{I}}(t+1)\},
\end{equation}
\begin{equation}\label{Pr_tc i}
\mathbb{P}^c_{ts,i}\{t|\emph{\textbf{I}}(t)\} = \mathbb{P}^c_{ts,i}(t)v_i(t) + \sum_{\emph{\textbf{I}}(t+1), \emph{I}_i(t+1)=0}
\mathcal{P}\{\emph{\textbf{I}}(t+1)|\emph{\textbf{I}}(t)\} \mathbb{P}^c_{ts,i}\{t+1|\emph{\textbf{I}}(t+1)\}.
\end{equation}
Here, the first terms in the righthand of equations~(13) and~(14) indicate the probabilities that the data transmission of the $i$-th UAV succeeds within the $t$-th frame in the U2D and the cellular modes, respectively. Besides, the second terms in the righthand of equations~(13) and~(14) accordingly imply the probabilities that the data transmission of the $i$-th UAV fails within the $t$-th frame and succeeds after the $(t+1)$-th frame in the U2D and the cellular modes, in which $\mathcal{P}\{\emph{\textbf{I}}(t+1)|\emph{\textbf{I}}(t)\}$ denotes the state transition probabilities from state $\emph{\textbf{I}}(t)$ to state $\emph{\textbf{I}}(t+1)$.

\begin{table}[!t]
\centering
\caption{Transition Probability of the $i$-th UAV in the $t$-th Frame}
\vspace{-3mm}
\begin{tabular}{|p{2in}<{\centering}|p{0.5in}<{\centering}|p{0.5in}<{\centering}|p{0.5in}<{\centering}|}
  \hline
  \diagbox{$I_i(t)$}{$\mathcal{P}\left\{I_i(t+1)|I_i(t)\right\}$}{$I_i(t+1)$} & 0 & 1 & 2 \\
  \hline
  0 & $\mathbb{P}_{tf,i}$ & $\mathbb{P}^u_{ts,i}$ & $\mathbb{P}^c_{ts,i}$ \\
  \hline
  1 & 0 & 1 & 0 \\
  \hline
  2 & 0 & 0 & 1 \\
  \hline
\end{tabular}
\vspace{-7mm}
\label{Transition Probability Table}
\end{table}

Since the STPs of different UAVs are independent with each other, we have $\mathcal{P}\{\emph{\textbf{I}}(t+1)|\emph{\textbf{I}}(t)\} = \prod_{i=1}^N \mathcal{P}\{\emph{I}_i(t+1)|\emph{I}_i(t)\}$, where the values of $\mathcal{P}\{\emph{I}_i(t+1)|\emph{I}_i(t)\}$ are shown in Table~\ref{Transition Probability Table}. In the first row of Table~\ref{Transition Probability Table}, $\mathbb{P}^u_{ts,i}$ and $\mathbb{P}^c_{ts,i}$ are given in Proposition~\ref{proposition1}, and $\mathbb{P}_{tf,i} = (1-\mathcal{P}^u_{ts,i})(1-\mathcal{P}^c_{ts,i})$ denotes the probability that the data transmission of the $i$-th UAV fails. Besides, the second and third rows of Table~\ref{Transition Probability Table} indicate that the transmission state of a UAV will keep unchanged when it has already successfully transmitted the sensory data. According to equations~(\ref{Pr_tu i}) and~(\ref{Pr_tc i}), we can solve $\mathbb{P}^u_{ts,i}\{t|\emph{\textbf{I}}(t)\}$ and $\mathbb{P}^c_{ts,i}\{t|\emph{\textbf{I}}(t)\}$ via the recursive algorithm in~\cite{JHL-2018}, respectively. Thus, the probabilities for the $i$-th UAV to successfully transmit its sensory data in the U2D and the cellular modes within the transmission part can be given by $p^u_{ts,i} = \mathbb{P}^u_{ts,i}\{T_s+1|\emph{\textbf{I}}(T_s+1)\}$ and $p^c_{ts,i} = \mathbb{P}^c_{ts,i}\{T_s+1|\emph{\textbf{I}}(T_s+1)\}$, respectively. Then, the probability that the $i$-th UAV fails to transmit the sensory data within the transmission part is calculated by $p_{tf,i} = 1 - p^u_{ts,i} - p^c_{ts,i}$.

\section{Trajectory Design using Multi-agent Deep Reinforcement Learning}
\label{Trajectory Design using DRL}
In the system, UAVs aim to achieve valid data transmissions, i.e., both the sensing
and the transmission processes are successful. Since UAVs' sensing and
transmission processes have already been formulated as the state transitions in the nested bi-level Markov chains, UAVs can design their trajectories in consideration of their
sensing and transmission states at the same time. In this section, we investigate on the trajectory design problem for UAVs using multi-agent DRL, and propose a DQN-based multi-UAV trajectory design algorithm to solve this problem. Before that, we first discrete the space to design the trajectories for UAVs.

\subsection{Space Discretization}

\begin{figure}[!t]
\centering
\vspace{-12mm}
\includegraphics[width=6in]{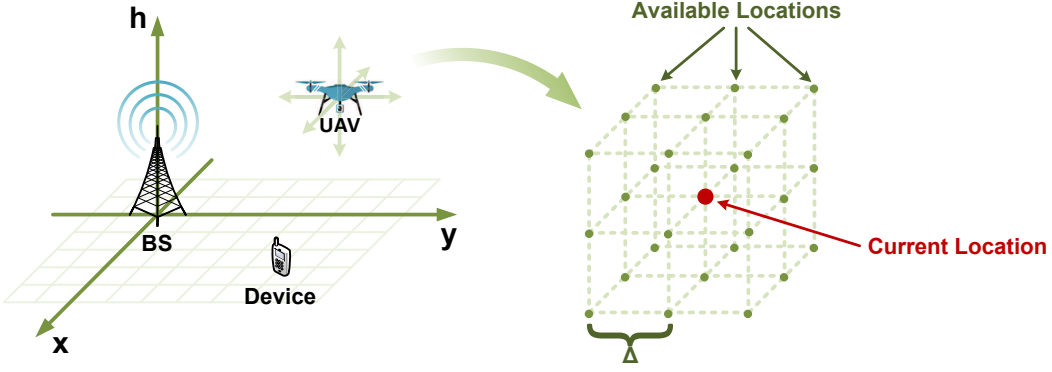}
\vspace{-9mm}
\caption{Lattice model of available locations that a UAV can select in the next cycle.}
\vspace{-7mm}
\label{lattice}
\end{figure}

We introduce a lattice model to describe the movements of UAVs in Fig.~\ref{lattice}. We assume that the space is divided into a finite set of discrete spatial points $\mathcal {S}$, which can be illustrated as a square lattice pattern. Thus, the trajectory of the $i$-th UAV starting from the $c$-th cycle can be represented as the sequence of locations $\mathcal{T}_i^{(c)} = \left\{\emph{\textbf{x}}_i^{(c)}, \emph{\textbf{x}}_i^{(c+1)}, ...\right\}$, in which $\emph{\textbf{x}}_i^{(c)}$ denote the location of the $i$-th UAV at the beginning of the $c$-th cycle, with $\emph{\textbf{x}}_i^{(c)} \in \mathcal {S}$, and $c\in[1,\infty)$.

As shown in Fig.~\ref{lattice}, the set of available locations that a UAV can select in the $c$-th cycle is a lattice with the center of $\emph{\textbf{x}}_i^{(c)}$, consisting of at most $27$ available locations. Besides, we define distance between any two adjacent available locations as $\Delta$, and thus, the maximum distance that a UAV can move in a cycle is $\sqrt{3}\Delta$. The value of $\Delta$ is determined by the maximum flying speed of a UAV, denoted by $v_{max}$. Specifically, we have $\Delta = T_c t_f v_{max}/\sqrt{3}$, in which $T_c$ is the number of frames within a cycle, and $t_f$ is the duration of a frame. Further, we introduce the available action set $\mathcal {A}(\emph{\textbf{x}})$ to denote the set of all vectors from the current location $\textbf{\emph{x}}$ to all available locations in the next cycle.

\subsection{Problem Formulation}

In this paper, we define the \emph{utility} of a UAV as the total number of valid data transmissions. Each UAV aims to maximize its utility by designing its trajectory. Besides, considering the timeliness requirements of sensing tasks, we introduce the discounting valuation on the successfully transmitted valid data. To be specific, for UAVs in the $c$-th cycle, the successfully transmitted valid data in the $c'$-th cycle ($c'>c$) is worth only $\rho^{c'-c}$ of that in the current cycle. Here, $\rho$ is the discount factor and $\rho \in [0,1]$. Therefore, the utility of the $i$-th UAV at the $c$-th cycle is defined as the total discounted rewards in the future, which is given by $\mathcal {U}_i^{(c)} = \sum_{n=0}^{\infty} \rho^{n} r_i^{(c+n)}$, in which $r_i^{(c)}$ denotes the reward of the $i$-th UAV in the $c$-th cycle. We define the reward of the $i$-th UAV in the $c$-th cycle as its sum VTP in both of the transmission modes within this cycle, i.e., $r_i^{(c)}  = p_{ss,i}p^u_{ts,i} + p_{ss,i}p^c_{ts,i}$, which is consistent with the output of the outer Markov chain in Fig.~\ref{outer_Markov_Chain}. Based on the above assumptions, the trajectory design problem in the UAV system can be formulated by
\begin{equation}
\begin{split}
\mathop {\max } \limits_{ \mathcal{T}_i^{(c)} } ~~& \mathcal {U}_i^{(c)} \\
s.t. ~~& \textbf{\emph{x}}_i^{(c'+1)} - \textbf{\emph{x}}_i^{(c')} \in \mathcal {A} \left(\textbf{\emph{x}}_i^{(c')}\right), c' \in [c,\infty).
\end{split}
\label{trajectory design problem}
\end{equation}

\subsection{Reinforcement Learning Formulation}

Since the rewards of UAVs in the future cycles are determined by the trajectories of all UAVs, the trajectory design problem~(\ref{trajectory design problem}) is difficult to solve. Fortunately, this problem can be regarded as a MDP~\cite{RA-1998}, and thus, we can solve this problem using reinforcement learning~(RL). Under the RL framework, each UAV is regarded as an \emph{agent}. The states and the actions of each agent are defined as the locations and the movements of the corresponding UAV, respectively. Based on~\cite{KMMA-2017}, we can characterize UAVs by a tuple $ < \mathcal {S}, \mathcal {A}_1,...,\mathcal {A}_N, \mathcal {P}, \mathcal {R}_1,...,\mathcal {R}_N, \rho >$, where:
\begin{itemize}
  \item $\mathcal {S}$ denotes the state space including all possible states of UAVs at the beginning of each cycle. The state of the $i$-th UAV, denoted by $\textbf{\emph{s}}_i$, is described by its location, i.e., $= \textbf{\emph{x}}_i$;
  \item $\mathcal {A}_i, i\in\mathcal {N}$ denotes the action space consisting of all available actions of the $i$-th UAV at the each cycle. The action of the $i$-th UAV, denoted by $\emph{\textbf{a}}_i$, is defined by its movement at a given state\footnote{According to the lattice model shown in Fig.~\ref{lattice}, a UAV may have at most $27$ available movements at a state, and thus, the cardinality of the action space is up to $27$.}. At state $\emph{\textbf{s}}_i$, the available action set of the $i$-th UAV is expressed as $\mathcal {A}_i = \mathcal {A}(\emph{\textbf{s}}_i)$;
  \item $\mathcal {P}: \mathcal {S}^N \times \prod_{i=1}^N\mathcal {A}_i \rightarrow \Pi(\mathcal {S})$ is the state transition function, which maps the state spaces and the action spaces of all UAVs in the current cycle to their state spaces in the next cycle. Here, $\Pi(\mathcal {S})$ is the probability distribution over the state space $\mathcal {S}$;
  \item $\mathcal {R}_i:\mathcal {S}^N\times\mathcal {A}_i \rightarrow \Pi(0,1), i\in\mathcal {N}$ is the reward function of the $i$-th UAV, which maps the state spaces and the action spaces of the UAV in the current cycle to its reward. As the reward of each UAV in a cycle equals to its sum VTP in both of the transmission modes, the reward of the $i$-th UAV taking action $\emph{\textbf{a}}_i$ at the state $\emph{\textbf{s}}$ is given by $\mathcal {R}_i(\emph{\textbf{s}}, \emph{\textbf{a}}_i) = p_{ss,i}p^u_{ts,i} + p_{ss,i}p^c_{ts,i}$;
  \item $\rho \in [0,1]$ represents the discount factor, which evaluates the successfully transmitted valid data of UAVs in the future.
\end{itemize}

At the beginning of each cycle, the $i$-th UAV observes the state $\emph{\textbf{s}} = \{\textbf{\emph{s}}_i \}$, i.e., the states of all UAVs\footnote{Based on the protocol in Section~\ref{Sense-and-Send Protocol}, the BS is required to inform each UAV of the locations of all UAVs at the end of the last cycle. Thus, all UAVs' locations can be observed by each UAV at the beginning of the current cycle.}, from state space $\mathcal {S}^N$. Then, it takes an action $\emph{\textbf{a}}_i$ from action space $\mathcal {A}_i$ based on its \emph{policy}, denoted by $\pi_i$. Here, the policy is a mapping from the state space to the action space, which can be expressed by $\emph{\textbf{a}}_i = \pi_i(\emph{\textbf{s}})$. As action $\emph{\textbf{a}}_i$ taken by the $i$-th UAV, it receives a reward $r_i = \mathcal {R}_i(\emph{\textbf{s}}, \emph{\textbf{a}}_i)$ and observes a new state $\emph{\textbf{s}}'$, namely the states of all UAVs in the next cycle. Therefore, the trajectory design problem~(\ref{trajectory design problem}) can be transformed into maximizing the total discounted rewards of all UAVs in the system by optimizing their policies.

\subsection{Algorithm Design}
In this part, we propose a DQN-based multi-UAV trajectory design algorithm to optimize the policies for all UAVs in the system. Specifically, our proposed algorithm adopts Q-learning for UAVs to learn how to take actions optimally at different states, in which the rewards of state-action pairs are estimated by DQNs.

Q-learning is a model-free RL algorithm which can effectively optimize the policies for agents~\cite{CP-1992,J-1994}. In our system, the Q-value of the $i$-th UAV, denoted by $Q_i \left( \emph{\textbf{s}},\emph{\textbf{a}}_i \right)$, is defined as the accumulated discounted rewards when it takes action $\emph{\textbf{a}}_i$ at state $\emph{\textbf{s}}$ and then follows its policy $\pi_i$ afterwards. In Q-learning, the $i$-th UAV at each state will select the action which maximizes its Q-value. Therefore, the policy of the $i$-th UAV at state $\emph{\textbf{s}}$, i.e., $\pi_i(\emph{\textbf{s}})$, is
\begin{equation}
\pi_i(\emph{\textbf{s}}) = \mathop {\arg \max }\limits_{\emph{\textbf{a}}_i' \in \mathcal {A}_i(\emph{\textbf{s}})} Q_i(\emph{\textbf{s}},\emph{\textbf{a}}_i').
\end{equation}
After taking the action $\emph{\textbf{a}}_i = \pi_i(\emph{\textbf{s}})$, the $i$-th UAV receives a reward $r_i=\mathcal {R}_i(\emph{\textbf{s}}, \emph{\textbf{a}}_i)$, and then observes the next state\footnote{Actually, a UAV does not know the actions of other UAVs in the current cycle by itself. However, it can still observe the next state according to the information from the BS.} $\emph{\textbf{s}}'$. The optimal Q-values of the $i$-th UAV can be obtained iteratively based on the following update rule:
\begin{equation}
Q_i(\emph{\textbf{s}},\emph{\textbf{a}}_i) \leftarrow Q_i(\emph{\textbf{s}},\emph{\textbf{a}}_i) + \alpha
\left(\mathcal {R}_i(\emph{\textbf{s}}, \emph{\textbf{a}}_i) + \rho \mathop {\max }\limits_{\emph{\textbf{a}}_i' \in \mathcal {A}_i(\emph{\textbf{s}}')} Q_i(\emph{\textbf{s}}',\emph{\textbf{a}}_i') - Q_i(\emph{\textbf{s}},\emph{\textbf{a}}_i) \right),
\end{equation}
in which $\alpha$ is the learning rate. Once the optimal Q-values of a UAV are obtained, the optimal policy is also determined.

When the state-action space is small, Q-learning can works efficiently by maintaining look-up tables for the update of Q-values~\cite{HG-2018}. However, in our system, the state-action space of the trajectory design problem is very large, which makes the update of look-up tables infeasible. Fortunately, motivated by the deep neural network~(DNN), we can estimate the Q-values by a DNN function approximators, which are known as DQNs~\cite{nature-2015}. To be specific, DQNs can address the sophisticate mappings between the states of UAVs and their corresponding Q-values based on a large amount of training data. In our system, when a DQN is well-trained for the $i$-th UAV, given the current state $\emph{\textbf{s}}$ as the input of the DQN, we can obtain the Q-values of the $i$-th UAV taking different actions, i.e., $Q_i(\emph{\textbf{s}},\emph{\textbf{a}}_i), \emph{\textbf{a}}_i \in \mathcal {A}_i(\emph{\textbf{s}})$, from the outputs of the DQN.

In order to train the DQN of the $i$-th UAV $Q_i$, we utilize a separate network $\widehat{Q}_i$, namely the target network~\cite{nature-2015}, to generate the target for training. Define the weights of $Q_i$ and $\widehat{Q}_i$ as $\Theta_i$ and $\Theta_i^-$, respectively. During the training of $Q_i$, we update its weight $\Theta_i$ by minimizing the following loss function:
\begin{equation}
Loss(\Theta_i) = \sum\limits_{m \in \mathcal {D}_i} {\left(y - Q_i(\emph{\textbf{s}},\emph{\textbf{a}}_i; \Theta_i )\right)}^2,
\end{equation}
where $y = r_i + \max_{\emph{\textbf{a}}_i'} \widehat{Q}_i(\emph{\textbf{s}}',\emph{\textbf{a}}_i'; \Theta_i^- )$ denotes the target for training, and $\mathcal {D}_i$ is the training set with size $D$. Each sample in the training set, denoted by $m = \{\emph{\textbf{s}},\emph{\textbf{a}}_i,r_i,\emph{\textbf{s}}'\}$, contains the current state $\textbf{\emph{s}}$, the action taken by the $i$-th UAV $\textbf{\emph{a}}_i$, the reward received by the $i$-th UAV $r_i$, and the next state $\textbf{\emph{s}}'$. However, we do not update the weight of the target network $\Theta_i^-$ simultaneously with $\Theta_i$. Instead, we set $\Theta_i^- = \Theta_i$ every $\tau$ steps of the updating for $\Theta_i$.

\begin{algorithm}[t]
  \caption{DQN-based multi-UAV trajectory design algorithm for the $i$-th UAV.}
  \label{Trajectory Design Algorithm}
  \begin{algorithmic}[1]
    \REQUIRE
	    The structure of the network $Q_i$ and its target network $\widehat Q_i$; Maximum number of cycles $C$;
        Initial state $\textbf{\emph{s}}$.
    \ENSURE
        Policy $\pi_i$.
    \STATE Initialize replay memory $\mathcal {M}_i$;
    \STATE Initialize network $Q_i$ with random weight $\Theta_i$;
    \STATE Initialize target network $\widehat Q_i$ with weight $\Theta^{-}_i = \Theta_i$;
    \FOR{$c = 1: C$ }
    \STATE With probability $1-\epsilon$ select action $\emph{\textbf{a}}_i = { \arg \max}_{\emph{\textbf{a}}_i'} Q_i(\emph{\textbf{s}},\emph{\textbf{a}}_i';\Theta_i)$; otherwise, select a random action from $\mathcal {A}_i(\emph{\textbf{s}})$;
    \STATE Take action $\emph{\textbf{a}}_i$, and then observe reward $r_i$ and next state $\emph{\textbf{s}}'$;
    \STATE Store the sample $\{\emph{\textbf{s}},\emph{\textbf{a}}_i,r_i,\emph{\textbf{s}}'\}$ into replay memory $\mathcal {M}_i$;
    \STATE Sample a mini-batch $\mathcal {D}_i$ from replay memory $\mathcal {M}_i$;
    \STATE Perform a gradient descend step on the loss function $Loss(\Theta_i)$ with respect to weight $\Theta_i$ using data set $\mathcal {D}_i$ ;
    \STATE Set $\Theta_i^- = \Theta_i$ every $\tau$ steps;
	\ENDFOR
    \RETURN Policy $\pi_i$ based on the trained network $Q_i$.
	\end{algorithmic}
\end{algorithm}

The proposed DQN-based multi-UAV trajectory design algorithm is presented in Algorithm~\ref{Trajectory Design Algorithm}. In our algorithm, we adopt the \emph{experience replay} to suppress the temporal correlation in the generated training data~\cite{HG-2018}. Specifically, the training data of the $i$-th UAV is stored in a replay memory, denoted by $\mathcal {M}_i$, with size $M$. During the training of network $Q_i$, we randomly sample a mini-batch with $D$ samples from the replay memory $\mathcal {M}_i$ as the training set $\mathcal {D}_i$ in each iteration. Besides, we utilize $\epsilon$\emph{-greedy} policy to balance the exploration and exploitation~\cite{nature-2015,RTZ-2013}.
Moreover, the \emph{reduced available action set} method is used to accelerate the convergency~\cite{JHL-2018}.

\section{Performance Analysis}
\label{System Performance Analysis}
In this section, we first analyze the convergency and the complexity of our proposed algorithm. After that, we present some properties on the mode selection and the trajectory design for UAVs.

\subsection{Algorithm Analysis}
\subsubsection{Convergency}
In our algorithm, we adopt the gradient descend method to update the weight $\Theta_i$ for network $Q_i$, in which the learning rate exponentially decreases with iterations. Therefore, after a finite number of iterations, the weight $\Theta_i$ will converge to a certain value, which ensures the convergency of our proposed algorithm. Actually, the convergency of neural networks is challenging to be theoretically analyzed before training, as referred in~\cite{UWC-2019}. The analytical challenge lies in that the convergence of a neural network is highly dependent on the hyperparameters used during the training process, in which the quantitative relationship between the network convergency and the hyperparameters is complicated. Instead, in our paper, the convergency of our algorithm can be observed by simulation.


\subsubsection{Complexity}
The time complexity of a network $Q_i$ is characterized by the number of operations in each iteration during the update of its weight $\Theta_i$. Assume the network $Q_i$ has $Z$ layers, whose numbers of neurons are denoted by $q_i$, $i=1,...,Z$. Then, the time complexity in each iteration can be expressed as $\mathcal {O}\left(\sum_{i=1}^{Z-1} q_i q_{i+1}\right)$. When all layers in the network have the same amount of neurons, denoted by $q$, the time complexity can be given by $\mathcal {O}\left((Z-1) q^2\right) = \mathcal {O}\left(q^2\right)$.

\subsection{UAV Performance Analysis}
\subsubsection{Mode Selection}
In the system, which transmission mode that a UAV may select is mainly determined by the location of this UAV. To analyze the mode selection of a UAV, we consider a general case that a UAV flies directly from its mobile device to the BS at a fixed altitude\footnote{The following analysis can also be applied to the cases where UAV's trajectory is in parallel to the direction from the BS and the mobile device.}, which is illustrated in Fig.~\ref{Remark_transmission}(a). The locations of the UAV, the mobile device, and the BS are given by $\emph{\textbf{x}}$, $\emph{\textbf{x}}^d$, and $\emph{\textbf{x}}_0$, respectively. Without loss of generality, we assume $\emph{\textbf{x}}^d=(x^d,0,0)$. Besides, we define the 2D distance from the BS to the UAV as $L_{2D}$, which is the distance from the mobile device to the UAV's projection on the ground. Moreover, we denote the STP of the UAV in the U2D and the cellular modes as $\mathcal{P}^{u}_{ts}$ and $\mathcal{P}^{c}_{ts}$, respectively. Based on these notations, we can show how the UAV-BS distance influences the UAV's STPs in different modes by Proposition~\ref{Remark1}.

\begin{figure}[!t]
\centering
\vspace{-5mm}
\includegraphics[width=6in]{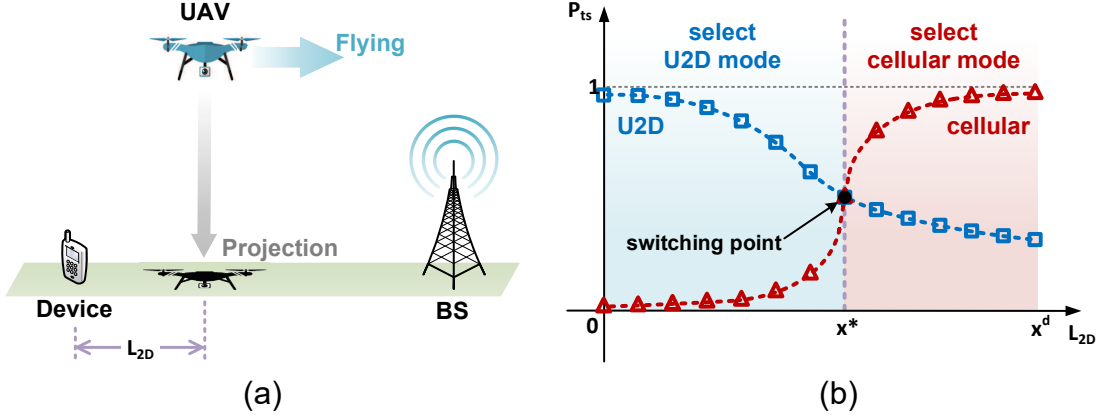}
\vspace{-12mm}
\caption{Mode selection for UAV transmissions.}
\vspace{-7mm}
\label{Remark_transmission}
\end{figure}

\begin{proposition}\label{Remark1}
  As the increase of $L_{2D}$, $\mathcal{P}^{u}_{ts}$ monotonously decreases while $\mathcal{P}^{c}_{ts}$  monotonously increases. Besides, $\mathcal{P}^{c}_{ts}$ changes more sharply with $L_{2D}$ than $\mathcal{P}^{u}_{ts}$.
\end{proposition}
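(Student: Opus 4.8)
The plan is to analyze the two STP expressions as functions of $L_{2D}$ directly, using the formulas from Section~\ref{Transmission Model}. Set up coordinates so that $\textbf{\emph{x}}^d = (x^d,0,0)$, $\textbf{\emph{x}}_0 = (0,0,H_0)$, and the UAV moves along the segment from above $\textbf{\emph{x}}^d$ toward $\textbf{\emph{x}}_0$ at fixed altitude $h$; then the 2D UAV--BS distance $L_{2D}$ decreases monotonically as the UAV approaches the BS, while the 2D UAV--device distance is $x^d - L_{2D}$ (for the geometry of Fig.~\ref{Remark_transmission}(a), with $L_{2D}$ ranging over $[0, x^d]$). The key observation is that $\mathcal{P}^u_{ts}$ depends on the UAV only through the air-to-ground path loss $\mathcal{L}_{a}$ between the UAV and $\textbf{\emph{x}}^d$, which is an increasing function of the UAV--device horizontal distance $x^d - L_{2D}$ (larger distance $\Rightarrow$ larger path loss $\Rightarrow$ smaller SNR); hence as $L_{2D}$ increases, the UAV--device distance increases, $\mathcal{L}_a$ grows, $\chi_{LoS}$ and $\chi_{NLoS}$ in~(\ref{U2D_STP}) grow, and since $F_{ri}, F_{ra}$ are CDFs (non-decreasing), $\mathcal{P}^u_{ts} = \mathcal{P}_a(\textbf{\emph{x}}^d,\textbf{\emph{x}},R_{th})$ is non-increasing in $L_{2D}$.

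For $\mathcal{P}^c_{ts} = \mathcal{P}_{a}(\textbf{\emph{x}}_0,\textbf{\emph{x}},2R_{th})\,\mathcal{P}_{g}(\textbf{\emph{x}}^d,\textbf{\emph{x}}_0,2R_{th})$, note the second factor $\mathcal{P}_g$ does not depend on the UAV location at all, so monotonicity is governed entirely by the first factor. The UAV--BS air-to-ground link: as $L_{2D}$ increases the UAV--BS horizontal distance $d_{2D}$ increases, so by the same reasoning $\mathcal{P}_a(\textbf{\emph{x}}_0,\textbf{\emph{x}},2R_{th})$ decreases — but this is the \emph{cellular uplink} STP, so I need to recheck the claimed direction: in the geometry of Fig.~\ref{Remark_transmission}(a), ``increase of $L_{2D}$'' means the UAV moves \emph{away} from the BS, which should decrease the cellular STP, contradicting the statement as I first read it. The resolution is that the proposition must intend $L_{2D}$ to be measured so that increasing $L_{2D}$ means moving \emph{toward} the BS and away from the device (i.e.\ $L_{2D}$ is the device-to-UAV-projection distance, as the text's sentence ``the distance from the mobile device to the UAV's projection on the ground'' actually states); then UAV--device distance $= L_{2D}$ increases $\Rightarrow \mathcal{P}^u_{ts}\downarrow$, and UAV--BS distance $= x^d - L_{2D}$ decreases $\Rightarrow \mathcal{P}^c_{ts}\uparrow$. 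I would state this convention explicitly at the start of the proof and then both monotonicity claims follow from the CDF-monotonicity argument above together with the fact that $d_{2D}$ (horizontal separation) is what drives the path-loss formulas in~\cite{3GPP_TR_36_777}.

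For the final ``changes more sharply'' claim, the plan is to compare the derivatives $\bigl|\tfrac{d}{dL_{2D}}\mathcal{P}^c_{ts}\bigr|$ and $\bigl|\tfrac{d}{dL_{2D}}\mathcal{P}^u_{ts}\bigr|$. The cleanest route: both STPs have the same functional form $1 - F(\chi(\cdot))$ (up to the LoS/NLoS mixture and the constant factor $\mathcal{P}_g$ for the cellular case), and $\chi$ is a monotone function of the path loss $\mathcal{L}$, which via the 3GPP model is essentially affine in $\log$ of the 3D distance. Differentiating through, the sensitivity scales with how fast the 3D UAV--BS distance versus UAV--device distance changes with $L_{2D}$, and — because the BS is elevated at height $H_0$ while the device is on the ground — the 3D UAV--BS distance $\sqrt{(x^d-L_{2D})^2 + (h-H_0)^2}$ changes more rapidly in the relevant regime than the 3D UAV--device distance $\sqrt{L_{2D}^2 + h^2}$; additionally the cellular mode uses the doubled threshold $2R_{th}$, amplifying the argument of $F$. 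Bundling these, I would bound the cellular derivative below by the U2D derivative. I expect \textbf{this last comparison to be the main obstacle}: it requires either an explicit parameter regime (plausible altitude/distance ranges) or a monotonicity lemma on the composite $L_{2D} \mapsto F(\chi(\mathcal{L}(d_{3D}(L_{2D}))))$, and a fully rigorous inequality may need mild assumptions (e.g.\ on $H_0$, $h$, $K_{ri}$) that I would make explicit rather than claim in full generality; the simpler monotonicity halves of the proposition, by contrast, are immediate from the CDF structure.
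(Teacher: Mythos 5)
Your proposal is correct and follows essentially the same route as the paper: decompose $\mathcal{P}^{c}_{ts}$ so that $\mathcal{P}_{g}(\cdot)$ is a constant factor, obtain both monotonicity claims from the fact that the relevant link distance (UAV--device for U2D, UAV--BS for cellular) is monotone in $L_{2D}$, and attribute the sharper variation of $\mathcal{P}^{c}_{ts}$ to the doubled threshold $2R_{th}$ forced by the TDM structure of the cellular mode. The obstacle you flag in the last step is real but is not resolved in the paper either --- its proof of the ``changes more sharply'' claim is exactly the heuristic $R_{th}$-versus-$2R_{th}$ comparison with no derivative bound --- and your reading of $L_{2D}$ as the device-to-projection distance is the convention the paper's own proof implicitly uses.
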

\begin{proof}
According to subsection~\ref{Transmission Model}, given the QoS requirement $R_{th}$, we have $\mathcal{P}^{u}_{ts} = \mathcal{P}_{a,i}(\textbf{\emph{x}}^d,\textbf{\emph{x}},R_{th})$ and $\mathcal{P}^{c}_{ts} = \mathcal{P}_{a,i}(\textbf{\emph{x}}_0, \textbf{\emph{x}}, 2R_{th})\mathcal{P}_{g,i}(\textbf{\emph{x}}^d,\textbf{\emph{x}}_0,2R_{th})$. Since the locations of the mobile device and the BS are fixed, the second term in the expression of $\mathcal{P}^{c}_{ts}$, i.e., $\mathcal{P}_{g,i}(\cdot)$, is a constant. With the assumption that the UAV flies at a fixed altitude, as the increase of $L_{2D}$, $||\emph{\textbf{x}}-\emph{\textbf{x}}^d||_2$ increases, and thus, $\mathcal{P}^{u}_{ts}$ monotonously decreases. On the other side, $\mathcal{P}^{c}_{ts}$ monotonously increases with $L_{2D}$, since a higher value of $L_{2D}$ results in a lower value of $||\emph{\textbf{x}}-\emph{\textbf{x}}_0||_2$. Based on the above analysis, we can illustrate the STPs in both the transmission modes in Fig.~\ref{Remark_transmission}(b).

Furthermore, due to the TDM property of the cellular mode, the data transmissions in the cellular mode is harder to satisfy the QoS requirement than that of the U2D mode, which can be characterized by the terms $R_{th}$ and $2R_{th}$ in the expressions of $\mathcal{P}^{u}_{ts}$ and $\mathcal{P}^{c}_{ts}$, respectively. Therefore, $\mathcal{P}^{c}_{ts}$ changes more sharply with $L_{2D}$ than $\mathcal{P}^{u}_{ts}$, i.e., the absolute value of the slope of the red curve~(cellular) is larger than that of the blue one~(U2D) in Fig.~\ref{Remark_transmission}(b).
\end{proof}

According to the protocol in Section~\ref{Sense-and-Send Protocol}, the UAV tends to select the transmission mode with higher throughput. However, in this case, the transmission mode with a higher STP is equivalent to the one with a higher throughput. Therefore, based on Proposition~\ref{Remark1}, we can further figure out how the UAV select its transmission mode as it flies from its mobile device to the BS by Proposition~\ref{Remark2}.

\begin{proposition}\label{Remark2}
As the increase of $L_{2D}$, the UAV first tends to select the U2D mode, after a switching point it prefers the cellular mode. The UAV locating at the switching point is more likely to fail in transmission than that close to the mobile device or the BS.
\end{proposition}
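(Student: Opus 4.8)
The plan is to work with the two monotone functions supplied by Proposition~\ref{Remark1} together with the elementary identity for the failure probability, and to read off both assertions from the sign pattern of their difference. Write $\mathcal{P}^u_{ts}(L_{2D})$ and $\mathcal{P}^c_{ts}(L_{2D})$ for the two STPs along the straight flight from $\emph{\textbf{x}}^d$ to $\emph{\textbf{x}}_0$, and set $\phi(L_{2D})=\mathcal{P}^u_{ts}(L_{2D})-\mathcal{P}^c_{ts}(L_{2D})$. By Proposition~\ref{Remark1}, $\mathcal{P}^u_{ts}$ is continuous and strictly decreasing in $L_{2D}$ while $\mathcal{P}^c_{ts}$ is continuous and strictly increasing, so $\phi$ is continuous and strictly decreasing and hence has at most one zero. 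When the UAV is essentially above its mobile device ($L_{2D}$ small) the U2D link is short so $\mathcal{P}^u_{ts}$ is near its maximum, whereas the cellular mode must relay through the far BS and additionally pays the $2R_{th}$ penalty and the bounded ground-link factor $\mathcal{P}_{g,i}(\emph{\textbf{x}}^d,\emph{\textbf{x}}_0,2R_{th})<1$, so $\mathcal{P}^c_{ts}$ is small and $\phi>0$; when the UAV is near the BS ($L_{2D}$ large) $\mathcal{P}^u_{ts}$ is driven small by the long U2D path while $\mathcal{P}^c_{ts}$ stays bounded away from $0$, so $\phi<0$. Hence, as long as the device--BS route is long enough for this sign change (the regime of interest), the intermediate value theorem gives a unique $L^{*}$ with $\phi(L^{*})=0$, with $\mathcal{P}^u_{ts}>\mathcal{P}^c_{ts}$ for $L_{2D}<L^{*}$ and $\mathcal{P}^u_{ts}<\mathcal{P}^c_{ts}$ for $L_{2D}>L^{*}$.

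Next I would invoke the observation made just before the proposition, that in this one-dimensional geometry a higher STP is equivalent to a higher throughput, so the mode-selection rule of Section~\ref{Sense-and-Send Protocol} makes the UAV adopt whichever mode has the larger STP. Combined with the sign pattern of $\phi$, this gives the first assertion: the UAV favors the U2D mode for $L_{2D}<L^{*}$ and switches to the cellular mode for $L_{2D}>L^{*}$, so $L^{*}$ is the switching point. The ``changes more sharply'' clause of Proposition~\ref{Remark1} makes the crossing transversal, so the switch is sharp rather than marginal, but it is not otherwise needed here.

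For the final sentence I would use the failure probability $\mathcal{P}_{f}(L_{2D}):=1-\mathcal{P}_{ts}=\bigl(1-\mathcal{P}^u_{ts}(L_{2D})\bigr)\bigl(1-\mathcal{P}^c_{ts}(L_{2D})\bigr)$, which follows from the definition $\mathcal{P}_{ts}=1-(1-\mathcal{P}^u_{ts})(1-\mathcal{P}^c_{ts})$ in Section~\ref{Subchannel Allocation Mechanism}. At the switching point both factors equal $1-p^{*}$ with $p^{*}:=\mathcal{P}^u_{ts}(L^{*})=\mathcal{P}^c_{ts}(L^{*})$, so $\mathcal{P}_{f}(L^{*})=(1-p^{*})^{2}$, a strictly positive quantity since $p^{*}=\mathcal{P}^c_{ts}(L^{*})\le\mathcal{P}_{g,i}(\emph{\textbf{x}}^d,\emph{\textbf{x}}_0,2R_{th})<1$. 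For a UAV close to the mobile device, $\mathcal{P}^u_{ts}$ is near its peak so $1-\mathcal{P}^u_{ts}$ is small and $\mathcal{P}_{f}\le 1-\mathcal{P}^u_{ts}$ is small; symmetrically, for a UAV close to the BS $1-\mathcal{P}^c_{ts}$ is small and $\mathcal{P}_{f}\le 1-\mathcal{P}^c_{ts}$ is small. Thus $\mathcal{P}_{f}$ is pushed down toward the two ends of the route while retaining the positive value $(1-p^{*})^{2}$ at $L^{*}$, which is the stated comparison; being continuous and (in the idealized regime) vanishing at both endpoints, $\mathcal{P}_{f}$ attains an interior maximum in the switching region, matching Fig.~\ref{Remark_transmission}(b).

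The step I expect to be the real obstacle is making this last comparison quantitatively tight rather than merely asymptotic: on $[0,L^{*}]$ the two factors of $\mathcal{P}_{f}$ move in opposite directions, so monotonicity alone does not give $\mathcal{P}_{f}(L)<\mathcal{P}_{f}(L^{*})$ at a fixed near-device location, and near the BS one only has $\mathcal{P}_{f}\le 1-\mathcal{P}^c_{ts}$, which need not lie below $(1-p^{*})^{2}$ unless the cellular link there is close to ideal. I would close this gap with the mild and physically natural assumptions that the U2D link essentially always meets the QoS target when the UAV hovers over its device, and the cellular air link essentially always meets it when the UAV is over the BS (equivalently, $\mathcal{P}^u_{ts}$ and the air part of $\mathcal{P}^c_{ts}$ are close to $1$ at the respective endpoints, precisely the regime drawn in Fig.~\ref{Remark_transmission}); under these, $\mathcal{P}_{f}$ at both endpoints is below $(1-p^{*})^{2}$ and the comparison is rigorous.
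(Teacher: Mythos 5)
Your proposal is correct and follows essentially the same route as the paper's own proof: both rely on the monotonicity of $\mathcal{P}^u_{ts}$ and $\mathcal{P}^c_{ts}$ from Proposition~\ref{Remark1} together with the endpoint behavior at $L_{2D}\to 0$ and $L_{2D}\to x^d$ to deduce a single crossing, and both attribute the elevated failure risk at the crossing to the fact that neither mode is reliable there. Your version is in fact tighter than the paper's --- the explicit intermediate-value argument, the product form $\mathcal{P}_f=(1-\mathcal{P}^u_{ts})(1-\mathcal{P}^c_{ts})$, and the honest flagging of the endpoint assumptions needed for the failure-probability comparison are all glossed over in the paper, which moreover locates $x^*$ by setting the \emph{derivative} of $\mathcal{P}^u_{ts}-\mathcal{P}^c_{ts}$ to zero rather than the difference itself, a slip that your condition $\phi(L^{*})=0$ corrects.
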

\begin{proof}
When the UAV is close to the mobile device and far from the BS, i.e., $L_{2D}\rightarrow0$, the data transmissions are more likely to succeed in the U2D mode, since $\mathcal{P}^{u}_{ts}$ is higher than $\mathcal{P}^{c}_{ts}$. Besides, when the UAV is close to the BS and far from the mobile device, i.e., $L_{2D}\rightarrow x^d$, $\mathcal{P}^{u}_{ts}$ can reach a high value while $\mathcal{P}^{c}_{ts}$ is relatively low, and thus, the cellular mode can ensure successful data transmissions. Therefore, as $L_{2D}$ increases from $0$ to $x^d$, the UAV first tends to select the U2D mode, after a \emph{switching point}, denoted by $x^*$, it prefers the cellular mode.

When the UAV locates at the switching point, neither the U2D nor the cellular mode can guarantee successful data transmissions. Thus, we can conclude that UAV locating at the switching point is more likely to fail in transmission than that close to the mobile device or the BS. Since $\mathcal{P}^{u}_{ts}$ and $\mathcal{P}^{c}_{ts}$ are continuous functions of $L_{2D}$, we can find the location of $x^*$ by setting the derivative of $\mathcal{P}^{u}_{ts} - \mathcal{P}^{c}_{ts}$ to zero. Although the analytical solution of $x^*$ is very hard to obtain, we can still get the numerical solution of $x^*$ by simulations.
\end{proof}

\subsubsection{Trajectory Design}
In the system, the trajectory of a UAV can not be characterized by specific system parameters. In other words, even all of the system parameters are given, the trajectory of this UAV can not be determined before simulation. However, we can still find the factors that may influence the trajectories of the UAV. Specifically, we consider the case that a UAV performs sensing task for $C$ cycles at a fixed altitude, whose starting point locates at the above of its mobile device. In this case, the trajectory of the UAV can be limited in finite area determined by the BS, its mobile device, and its sensing target, which is given by Proposition~\ref{Remark3}.

\begin{figure}[!t]
\centering
\vspace{-3mm}
\includegraphics[width=3.5in]{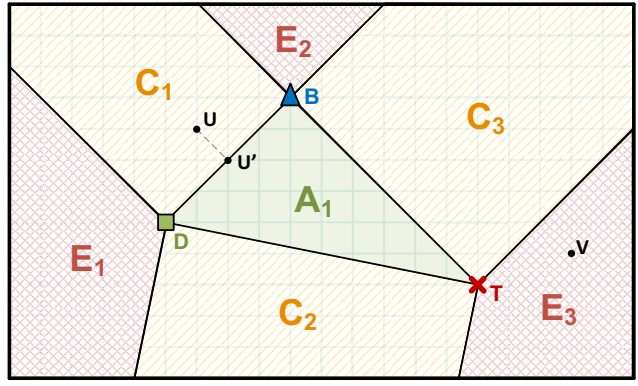}
\vspace{-7mm}
\caption{The trajectory of a UAV, where the BS, its mobile device, and its sensing target are denoted by points $B$, $D$, and $T$, respectively.}
\label{Remark_trajectory}
\vspace{-7mm}
\end{figure}

\begin{proposition}\label{Remark3}
In the top view of the system, the trajectory of a UAV is within a triangular area with the BS, its mobile device and its sensing target as the vertexes.
\end{proposition}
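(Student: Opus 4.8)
The plan is to show that starting from a point above its mobile device $D$, a UAV's greedily-chosen moves never carry it outside the triangle $BDT$, so by induction the whole trajectory stays inside. First I would set up the geometry: work in the top view (2D projection), let $D$, $T$, $B$ be the projections of the mobile device, the sensing target, and the BS, and let the UAV's projected position in cycle $c$ be $\emph{\textbf{p}}^{(c)}$, with $\emph{\textbf{p}}^{(1)} = D$. The reward $\mathcal{R}_i(\emph{\textbf{s}},\emph{\textbf{a}}_i) = p_{ss,i}(p^u_{ts,i} + p^c_{ts,i})$ factors through two geometric quantities: the UAV--target distance $l_i$ (which controls $p_{ss,i} = \prod_{t} e^{-\lambda t_f l_i(t)}$, decreasing in $l_i$) and the UAV's distances to $D$ and $B$ (which, via Proposition~\ref{Remark1}, control the two STPs $\mathcal{P}^u_{ts}$ and $\mathcal{P}^c_{ts}$, hence $p^u_{ts,i}$ and $p^c_{ts,i}$). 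The key structural fact I would extract from Propositions~\ref{Remark1} and~\ref{Remark2} is that the transmission-success factor $p^u_{ts,i}+p^c_{ts,i}$ is, along the segment $DB$, never worse than at any point reachable by moving \emph{away} from that segment, because moving off the line $DB$ increases \emph{both} $\|\emph{\textbf{x}}-\emph{\textbf{x}}^d\|_2$ and $\|\emph{\textbf{x}}-\emph{\textbf{x}}_0\|_2$ and so decreases both STPs.

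The core of the argument is a projection/contraction lemma: if the current point $\emph{\textbf{p}}^{(c)}$ lies in the triangle $BDT$, then the action maximizing the (discounted) utility leads to a point $\emph{\textbf{p}}^{(c+1)}$ that also lies in $BDT$. To see this I would argue by contradiction — suppose the optimal next point $\emph{\textbf{q}}$ lies outside the triangle. Let $\emph{\textbf{q}}'$ be the orthogonal projection of $\emph{\textbf{q}}$ onto the triangle (onto the nearest edge or vertex). Since $\emph{\textbf{p}}^{(c)}$ is inside the convex triangle, $\|\emph{\textbf{q}}' - \emph{\textbf{p}}^{(c)}\| \le \|\emph{\textbf{q}} - \emph{\textbf{p}}^{(c)}\|$, so $\emph{\textbf{q}}'$ is still a feasible move (it is no farther than $\sqrt{3}\Delta$ from the current location, and one can snap to the nearest lattice point with only a lower-order perturbation). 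I would then check that $\emph{\textbf{q}}'$ dominates $\emph{\textbf{q}}$ in every geometric quantity that enters the reward: projecting toward the triangle does not increase $\|\cdot - \emph{\textbf{x}}^d\|_2$, $\|\cdot - \emph{\textbf{x}}_0\|_2$, or $\|\cdot - \emph{\textbf{x}}^t\|_2$ (these are the distances to the three vertices, and projecting a point onto a triangle weakly decreases the distance to each vertex). Hence $p_{ss,i}$ at $\emph{\textbf{q}}'$ is at least that at $\emph{\textbf{q}}$, and by Proposition~\ref{Remark1} both STP factors at $\emph{\textbf{q}}'$ are at least those at $\emph{\textbf{q}}$, so the immediate reward is weakly larger at $\emph{\textbf{q}}'$; applying the same argument to all subsequent cycles (the future subproblem from $\emph{\textbf{q}}'$ dominates that from $\emph{\textbf{q}}$ in the same sense) shows the utility at $\emph{\textbf{q}}'$ is at least that at $\emph{\textbf{q}}$, contradicting optimality of leaving the triangle (or giving an equally good in-triangle choice). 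Induction on $c$ then confines the entire trajectory to $BDT$.

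The main obstacle I expect is the claim that \emph{both} STP factors behave monotonically under the triangle-projection, not just along the line $DB$. Proposition~\ref{Remark1} is stated only for motion along the $D$-to-$B$ segment; the general two-dimensional statement I need is that $\mathcal{P}^u_{ts}$ is monotone decreasing in $\|\emph{\textbf{x}}-\emph{\textbf{x}}^d\|_2$ and $\mathcal{P}^c_{ts}$ is monotone decreasing in $\|\emph{\textbf{x}}-\emph{\textbf{x}}_0\|_2$, which follows from~(\ref{U2D_STP}) and~(\ref{STP_ground}) since the path losses $\mathcal{L}_{a,i}$, $\mathcal{L}_{g,i}$ are increasing in distance and $F_{ri}$, $F_{ra}$ are CDFs, but this needs to be spelled out carefully (in particular the LoS/NLoS mixture and the $d_c$ threshold in $\mathcal{P}_{LoS,i}$ require a monotonicity check). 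A secondary subtlety is the lattice discretization: the "project toward the triangle" move may not land exactly on a lattice point, so I would either argue in the continuum limit $\Delta \to 0$ or note that snapping to the nearest admissible lattice point keeps the point within $O(\Delta)$ of the triangle and absorb this into an approximate ("essentially within the triangle") version of the statement — which matches the informal, simulation-oriented tone of the surrounding Propositions~\ref{Remark1}--\ref{Remark2}.
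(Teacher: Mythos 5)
Your proposal is correct and takes essentially the same approach as the paper: the paper also argues that any point outside the triangle is dominated by its nearest point on the triangle's boundary (the foot point on an edge, or a vertex), which is simultaneously closer to the BS, the mobile device, and the sensing target, hence yields higher SSP and STPs and thus higher utility. Your version adds some care the paper omits --- the contraction property guaranteeing the projected point is still a feasible one-cycle move, the lattice-snapping issue, and the need to verify monotonicity of the STPs in distance beyond the segment case of Proposition~\ref{Remark1} --- but the underlying argument is the same.
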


\begin{proof}
As illustrated in Fig.~\ref{Remark_trajectory}, when the BS, the mobile device and the sensing target are non-collinear, they divide the plane into $7$ areas, which can be classified in to three groups: $\mathcal {G}_1 = \left\{A_1\right\}$, $\mathcal {G}_2 = \left\{C_1, C_2, C_3 \right\}$, and $\mathcal {G}_3 = \left\{E_1, E_2, E_3 \right\}$. For any location in $\mathcal {G}_2$ or $\mathcal {G}_3$, we can find a location in $\mathcal {G}_1$ where the UAV can obtain a higher utility than that in $\mathcal {G}_2$ or $\mathcal {G}_3$. For instance, given a point $U$ within area $C_1$ in $\mathcal {G}_2$, we can find its foot point $U'$ on the segment $BD$, i.e., a side of the triangular area $A_1$. Since $U'$ is closer to the sensing target $T$ than $U$, the SSP at $U'$ is higher. Besides, $U'$ is closer to the BS $B$ as well as the mobile device $D$ than $U$, and thus, the STPs in both transmission modes at $U'$ are also higher. Therefore, the UAV locating at $U'$ can obtain a higher utility than that of $U$. Likewise, given a point $V$ within area $E_3$ in $\mathcal {G}_3$, we can conclude that UAV locating at the sensing target $T$, i.e., a vertex of the triangular area $A_1$, can obtain a higher utility than that of $V$ for the same reason.

Since the UAV aims to maximize its utility by designing its trajectory, we can conclude that it will not fly to the areas in $\mathcal {G}_2$ or $\mathcal {G}_3$. Therefore, trajectory of the UAV is within the triangular area $A_1$. When the BS, the mobile device and the sensing target are collinear, the trajectory of the UAV is within the longest segment among these three locations. This proposition can be further extended to the 3D case, where the trajectory of the UAV is within a triangular prism area, whose height is determined by the maximum and minimum flying altitudes of the UAV.
\end{proof}

In addition, since the UAV aims to maximize its utility by designing its trajectory, it is motivated to fly to the locations where it can achieve high utility. Then, the trajectory of the UAV can be further described by Proposition~\ref{Remark4}.

\begin{figure}[!t]
\centering
\includegraphics[width=6in]{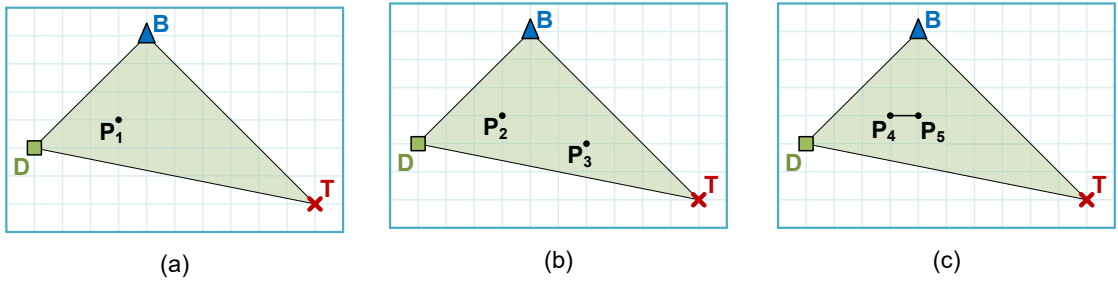}
\vspace{-11mm}
\caption{The optimal points of a UAV, where the BS, its mobile device, and its sensing target are denoted by points $B$, $D$ and $T$, respectively.}
\label{Remark_optimal}
\vspace{-7mm}
\end{figure}

\begin{proposition}\label{Remark4}
After performing enough cycles, i.e. $K\rightarrow\infty$, a UAV will keep still at a point or fly to and fro among multiple adjacent points.
\end{proposition}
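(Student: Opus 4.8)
The plan is to reduce Proposition~\ref{Remark4} to an elementary fact about deterministic dynamical systems on finite sets. Two ingredients are needed: (i) once the learning has converged, the UAVs follow \emph{fixed deterministic stationary} policies, and (ii) the joint state space is \emph{finite} --- indeed $\mathcal{S}$ is a finite lattice by the space discretization, and each UAV's positions are further confined to the triangular prism of Proposition~\ref{Remark3}. Granted these, the conclusion is immediate from the pigeonhole principle.

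First I would set up the dynamics. By the convergence argument in Section~\ref{System Performance Analysis}, after finitely many cycles the weights $\Theta_i$ stop changing and the exploration probability $\epsilon$ has vanished, so each UAV $i$ follows the fixed greedy policy $\emph{\textbf{a}}_i = \pi_i(\emph{\textbf{s}}) = \mathop{\arg\max}_{\emph{\textbf{a}}_i'} Q_i(\emph{\textbf{s}},\emph{\textbf{a}}_i')$ (ties broken by any fixed rule), a deterministic function of the commonly observed joint state $\emph{\textbf{s}} \in \mathcal{S}^N$. Since a UAV's next location is its current location plus the chosen action, the \emph{whole} joint state updates by a single deterministic map $F:\mathcal{S}^N \to \mathcal{S}^N$, i.e. $\emph{\textbf{s}}^{(c+1)} = F\!\left(\emph{\textbf{s}}^{(c)}\right)$; note that the randomness in the model enters only through the rewards (the SSP and STP), not through how locations evolve, which is why this map is genuinely deterministic.

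Second I would apply pigeonhole. As $\mathcal{S}$ is finite, so is $\mathcal{S}^N$, hence some state recurs in the sequence $\emph{\textbf{s}}^{(1)},\emph{\textbf{s}}^{(2)},\dots$: there exist $c_0$ and a minimal $p \ge 1$ with $\emph{\textbf{s}}^{(c_0+p)} = \emph{\textbf{s}}^{(c_0)}$, and determinism of $F$ then propagates this to $\emph{\textbf{s}}^{(c+p)} = \emph{\textbf{s}}^{(c)}$ for every $c \ge c_0$. Projecting onto the $i$-th coordinate, the positions $\emph{\textbf{x}}_i^{(c)}$ become periodic with some period $p'$ dividing $p$. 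If $p' = 1$, the UAV keeps still at a single lattice point. If $p' \ge 2$, the UAV cycles through $p'$ distinct points; any two consecutive points in the cycle differ by a vector in $\mathcal{A}(\cdot)$, hence by the lattice model of Fig.~\ref{lattice} lie within $\sqrt{3}\Delta$ of each other, i.e. are adjacent. This is exactly ``flying to and fro among multiple adjacent points,'' which finishes the proof.

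The main obstacle is the first step, not the combinatorics: one must justify that the learning process genuinely freezes into a fixed deterministic policy. This rests on the convergence of $\Theta_i$ discussed in Section~\ref{System Performance Analysis} (and observed in simulation) together with the fact that, once $\Theta_i$ is frozen, the greedy map is a fixed function of the state. A minor but worth stating point is that the argument is about the \emph{joint} system: a single UAV in isolation does not evolve deterministically because its next observed state depends on the other UAVs, so one must track all $N$ UAVs simultaneously and only afterwards project onto UAV $i$.
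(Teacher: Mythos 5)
Your proposal is correct in its logic but takes a genuinely different route from the paper. The paper's own proof is a qualitative argument about the utility landscape: since the reachable region is finite and discrete, there is a set of utility-maximizing points, and the paper enumerates three cases (a single optimal point, multiple non-adjacent optimal points, multiple adjacent optimal points) and asserts that a utility-maximizing UAV gravitates to one of them and then either stays or oscillates among adjacent optima. Your argument instead ignores the utility landscape entirely and treats the converged system as a deterministic map $F:\mathcal{S}^N\to\mathcal{S}^N$ on a finite set, from which eventual periodicity follows by pigeonhole, and adjacency of consecutive points follows from the action-set constraint. What your approach buys is rigor on the combinatorial side --- the paper never really justifies why the UAV reaches an optimal point or why it would not wander, whereas your argument proves eventual periodicity unconditionally once the policy is frozen; it also correctly identifies that one must track the joint state, which the paper glosses over. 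What the paper's approach buys is the (informal) identification of the limit points with utility-maximizing locations, which your argument deliberately does not claim: a deterministic limit cycle could in principle sit at suboptimal locations. Two caveats on your version: first, the frozen-deterministic-policy premise is exactly the part the paper itself concedes cannot be established analytically (and the $\epsilon$-greedy exploration only decays to $0.1$ in the simulations, so residual randomness never fully vanishes); second, your conclusion is that \emph{consecutive} points of the limit cycle are adjacent, not that all points in the cycle are pairwise adjacent, which is a slightly weaker reading of ``fly to and fro among multiple adjacent points'' than the paper's Fig.~\ref{Remark_optimal}(c) suggests. Neither caveat is fatal to the informal statement being proved.
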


\begin{proof}
Since the trajectory of a UAV is within a finite and discrete triangular area, there exists single or multiple optimal points in the area that the UAV can obtain the highest utility. Therefore, a UAV  aiming to maximize its utility has the tendency to fly to the optimal points. In the system, there may exist three cases for the optimal points:
\begin{itemize}
  \item \textbf{Single optimal point}: This case is shown in Fig.~\ref{Remark_optimal}(a). In this case, the UAV will fly to the optimal point $P_1$, and then keeps still.
  \item \textbf{Multiple non-adjacent optimal points}: This case is shown in Fig.~\ref{Remark_optimal}(b). In this case, the UAV flies to one of the optimal point $P_2$ or $P_3$, and then keeps still.
  \item \textbf{Multiple adjacent optimal points}: This case is shown in Fig.~\ref{Remark_optimal}(c). In this case, the UAV first flies to one of the optimal points $P_4$ or $P_5$, after which it may fly to and fro among these adjacent points.
\end{itemize}
Therefore, we can conclude that a UAV will keep still at a point or fly to and fro among multiple adjacent points after performing enough cycles.
\end{proof}

\section{Simulation Results}
\label{Simulation Results}

\begin{table}[!t]
\centering
\caption{Parameters for Simulation} \label{parameters}
\vspace{-3mm}
\begin{tabular}{|p{2.1in}|p{0.5in}|p{2.1in}|p{0.8in}|}
 \hline
 \textbf{Parameters} & \textbf{Values} & \textbf{Parameters} & \textbf{Values} \\
 \hline
 \hline
  Number of UAVs $N$ & 10 & Distance between two adjacent points $\Delta$ & 25 m\\
  \hline
  Number of subchannels $K$ & 2 & UAV sensing factor $\lambda$ & 0.0005 (s$\cdot$m)$^{-1}$ \\
  \hline
  Transmit power of UAVs $P^u$ & 10 dBm & Duration of a frame $t_f$ & 2 s\\
  \hline
  Transmit power of the BS $P^b$ & 41 dBm & QoS requirement $R_{th}$ & 1 $\textrm{bit}/\textrm{s}/\textrm{Hz}$\\
  \hline
  Noise power $N_0$ & -85 dBm & Length of the sensing part $T_s$ & 2\\
  \hline
  Carrier frequency $f_c$ & 2 GHz & Length of the transmission part $T_t$ & 2\\
  \hline
  Height of the BS $H_0$ & 10 m & Discount factor $\rho$ & 0.9\\
  \hline
  Minimum flying altitude of UAVs $h_{min}$ & 50 m & Size of mini-batch $D$ & 50\\
  \hline
  Maximum flying altitude of UAVs $h_{max}$ & 150 m & Size of replay memory $M$ & 3000\\
  \hline
  Maximum flying speed of UAVs $v_{max}$ & 5.4 m/s & Update frequency of target networks $\tau$ & 30\\
  \hline
\end{tabular}
\vspace{-5mm}
\end{table}

In this section, we present the simulation results on the trajectory design of UAVs in the U2D communications overlaying cellular network. The simulation parameters based on the existing 3rd Generation Partnership Project~(3GPP) technical reports~\cite{3GPP_TR_36_777} are given in Table~\ref{parameters}.

In the simulation, we model the cell as a circular area with the BS at the center and the radius as $500~\textrm{m}$. The sensing targets and mobile devices are randomly distributed within this circular area. The initial position of each UAV is on its corresponding mobile device with the altitude $100~\textrm{m}$. Furthermore, during the training of DQNs, the ReLU function, defined by $f_{ReLU}(x) = \textrm{max}\{0,x\}$, is adopted as the activation function~\cite{GTG-2013}. The learning rate has the initial value of $0.001$, and decreases exponentially with performing cycles. The exploration parameter $\epsilon$ in the $\epsilon$\emph{-greedy} policy decreases linearly during updating, whose initial and final value are $1.0$ and $0.1$, respectively.

\begin{figure}[!t]
\centering
\includegraphics[width=3.5in]{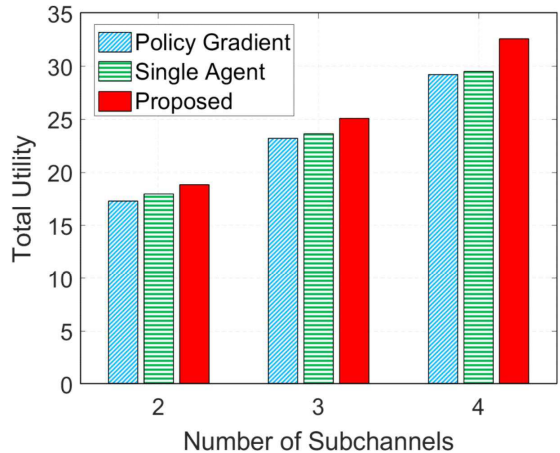}
\vspace{-8mm}
\caption{Performance comparison of three algorithms, given different number of subchannels $K$.}
\label{Simulation_Utility}
\vspace{-7mm}
\end{figure}

In Fig.~\ref{Simulation_Utility}, we compare our proposed algorithm with the following two algorithms:
\begin{itemize}
  \item \textbf{Policy gradient algorithm}~\cite{RDSY-1999}: Each UAV directly optimizes its parameterized control policy by a variant of gradient descent.
  \item \textbf{Single-agent Q-learning algorithm}~\cite{JHL-2018}: Each UAV updates its policy considering other UAVs as the environment.
\end{itemize}
To be specific, Fig.~\ref{Simulation_Utility} shows the performance comparison on the total utility in the system, given different numbers of subchannels $K$. We can observe that our proposed algorithm outperforms both policy gradient algorithm and single-agent algorithm in terms of the total utility. Besides, for any algorithm, the total utility increases with the number of subchannels $K$, since more frequency resources can be utilized by UAVs.

\begin{figure}[!t]
\centering
\includegraphics[width=6.4in]{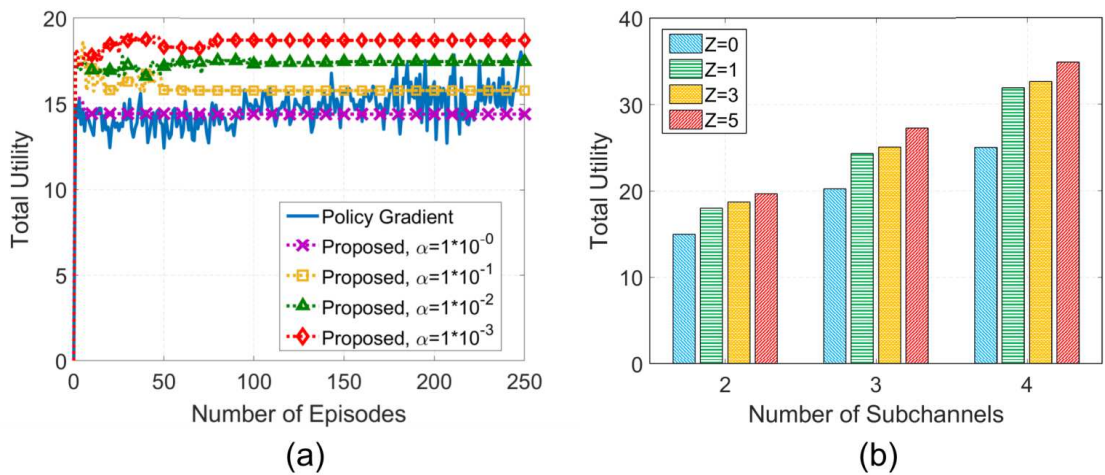}
\vspace{-10mm}
\caption{(a) Total utility versus the number of episodes, with different learning rate $\alpha$; (b) Performance comparison of our algorithm given different number of hidden layers $Z$. The DQN with $Z=0$ does not have hidden layer. The hidden layer of the DQN with $Z=1$ contains $250$ neurons. The hidden layers of the DQN with $Z=3$ have $500$, $250$, and $120$ neurons accordingly. The hidden layers of the DQN with $Z=5$ have $500$, $500$, $250$, $250$, and $120$ neurons accordingly. }
\label{Simulation_Convergency_Layer}
\vspace{-7mm}
\end{figure}

In Fig.~\ref{Simulation_Convergency_Layer}(a), we present the total utility versus the number of episodes, with different learning rate $\alpha$. Here, policy gradient algorithm is adopted as the benchmark. As shown in the figure, for all of the cases, our proposed algorithm converges after $100$ episodes, which is faster than the policy gradient algorithm. Besides, the network with a smaller learning rate may achieve a higher total utility in the system, while it may spend a little longer time to converge. In Fig.~\ref{Simulation_Convergency_Layer}(b), we compare the performance of our algorithm with different number of hidden layers $Z$. We can observe that the network with larger number of hidden layers can obtain higher total utility in the system. Likewise, for all cases, the total utility increases with the number of subchannels $K$, as more frequency resources can be utilized.

\begin{figure}[!t]
\centering
\vspace{-6mm}
\includegraphics[width=6in]{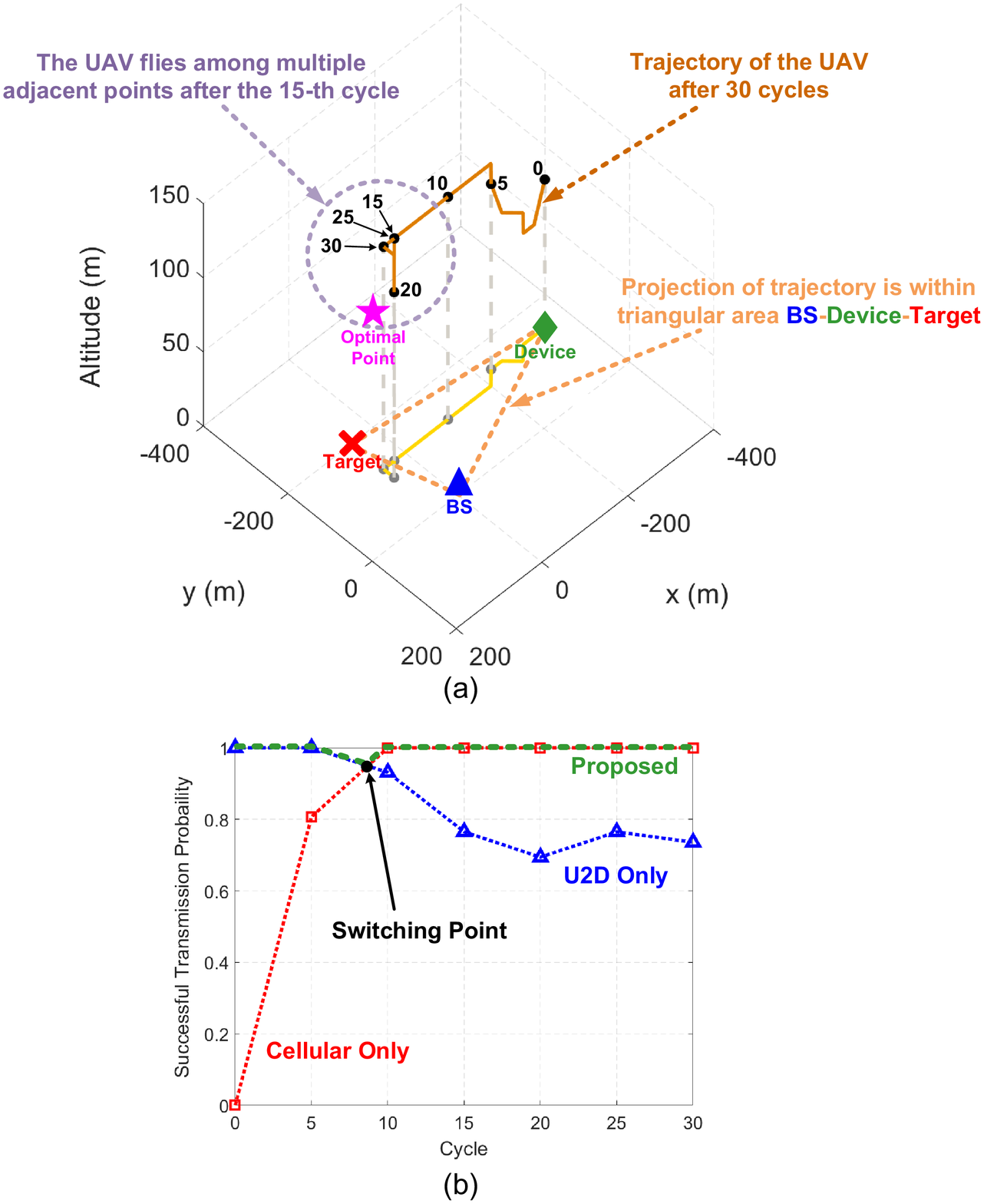}
\vspace{-12mm}
\caption{(a)~The trajectory of the UAV; (b)~The STPs of the UAV on its trajectory. The UAV's sensing target and mobile device are located at $\textbf{\emph{x}}^t = (50,-200,0)$ and $\textbf{\emph{x}}^d = (-350,-150,0)$, respectively. }
\label{Simulation_trajectory}
\vspace{-7mm}
\end{figure}

In Fig.~\ref{Simulation_trajectory}(a), we present the trajectory of a UAV after performing $30$ cycles, whose sensing target and mobile device are located at $\textbf{\emph{x}}^t = (50,-200,0)$ and $\textbf{\emph{x}}^d = (-350,-150,0)$, respectively. We can observe that the UAV first flies towards to its sensing target, and then flies among multiple adjacent points which are close to the optimal point obtained by simulation. Besides, the UAV's trajectory is within the triangular area whose vertexes are the BS, the mobile device, and the sensing target. This is consistent with Propositions~\ref{Remark3} and~\ref{Remark4}. Moreover, the UAV's altitude is changing during the flight, which is caused by the trade-off between the path loss and the LoS probability. In Fig.~\ref{Simulation_trajectory}(b), we show the UAV's STPs at different cycles on its trajectory. Overall, the UAV moves far from its mobile device and close to the BS. Therefore, as the UAV performs cycles, the STP in the U2D mode approximately decreases while that in the cellular mode monotonously increases, and the switching point is at the $9$-th cycle. We can conclude that the UAV tends to select the U2D mode before the $9$-th cycle, after which it prefers the cellular mode, which is consistent with Propositions~\ref{Remark1} and~\ref{Remark2}. Besides, by selecting the better transmission mode, the UAV  can effectively improve its overall STP during the flight, which is illustrated by the green curve in the figure.

\begin{figure}[!t]
\centering
\vspace{-6mm}
\includegraphics[width=6in]{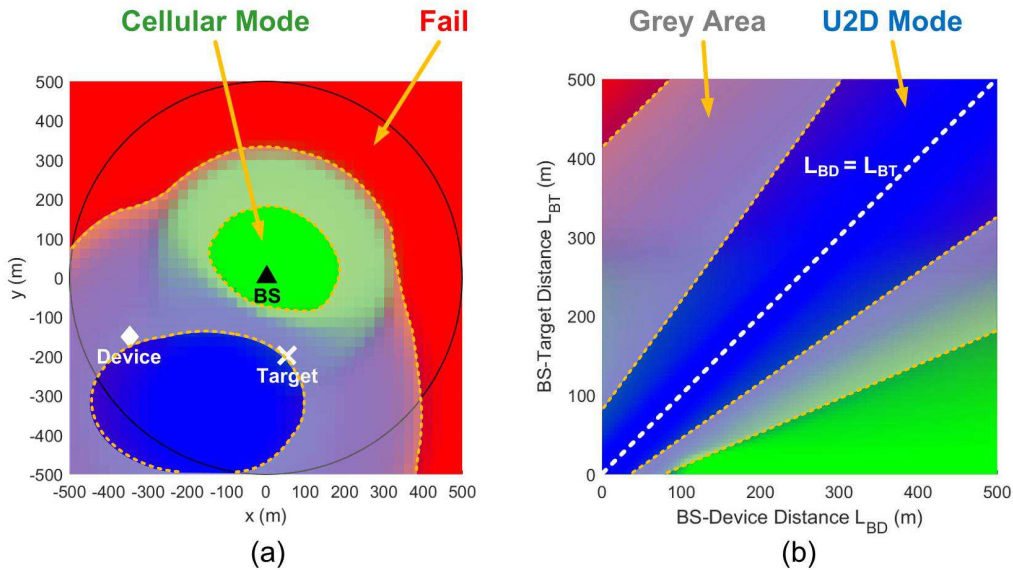}
\vspace{-12mm}
\caption{(a)~Transmission state distribution of the UAV, whose sensing target and mobile device are located at $\textbf{\emph{x}}^t = (50,-200,0)$ and $\textbf{\emph{x}}^d = (-350,-150,0)$, respectively; (b)~Transmission state distribution given different BS-device distances $L_{BD}$ and BS-target distances $L_{BT}$.}
\label{Mode_selection}
\vspace{-7mm}
\end{figure}

Fig.~\ref{Mode_selection}(a) shows the transmission state distribution of a UAV, with sensing target located at $\textbf{\emph{x}}^t = (50,-200,0)$ and mobile device located at $\textbf{\emph{x}}^d = (-350,-150,0)$. Here, we assume that other UAVs locate on their mobile devices with the altitude of $100~\textrm{m}$. From the figure, we can observe that the UAV tends to select the U2D mode for data transmission when it is close to its mobile device. Besides, as the UAV moves away from its mobile device and close to the BS, it prefers to transmit in the cellular mode. Neither of the modes can support valid data transmissions when the UAV flies far from its mobile device as well as the BS. When the UAV flies into the grey area, it may have multiple possible transmission states, i.e., the data transmission may succeed in either of the modes, or fail. In Fig.~\ref{Mode_selection}(b), we present the transmission state distribution of a UAV hovering over its sensing target at the altitude of $100~\textrm{m}$. We assume that the mobile device, the sensing target, and the BS are collinear. The BS-device distance and BS-target distance are denoted as $L_{BD}$ and $L_{BT}$, respectively. As shown in the figure, the UAV tends to select the U2D mode when its mobile device and sensing target are close to each other, i.e., the area near the white dash line which implies $L_{BD} = L_{BT}$. In addition, the UAV will transmit the sensory data in the cellular mode when the sensing target is close to the BS while the mobile device is far from the BS. The data transmission may fail when the mobile device is next to the BS while the sensing target is remote from the BS. In the grey areas, the UAV may have more than one possible transmission states.

\section{Conclusion}
\label{Conclusion}
In this paper, we have proposed overlaying U2D communications in a cellular Internet of UAVs, and investigated the trajectory design problem in this network. We have designed a joint sensing and transmission protocol to schedule the sensing and transmission processes of multiple UAVs, and analyzed this protocol using nested bi-level Markov chains. Since the trajectory design problem can be regarded as a MDP, we have formulated this problem using multi-agent DRL. Then, we have proposed a DQN-based multi-UAV trajectory design algorithm to solve this problem. Simulation results have shown that our proposed algorithm can achieve a higher utility than policy gradient algorithm and single-agent algorithm.

Three conclusions can be drawn from the simulation results. First, UAVs tend to select the U2D mode when their mobile devices and sensing targets are close to each other. Second, UAVs prefer the cellular mode if the sensing targets are near the BS while the mobile devices are located at the cell edge. Third, UAVs' transmissions are prone to fail when their sensing targets are located at the cell edge while their mobile devices are close to the central BS.

\begin{appendices}
\section{Proof of Proposition 1}\label{Proof1}
According to the joint sensing and transmission protocol in Section~\ref{Sense-and-Send Protocol}, a UAV will select the U2D mode in either of the following two cases:
\begin{itemize}
  \item The data transmissions satisfy the QoS requirement only in the U2D mode;
  \item The data transmissions satisfy the QoS requirement in both the U2D and the cellular modes, and the throughput in the U2D mode is higher than that in the cellular mode.
\end{itemize}
We can then express the probability that the $i$-th UAV selects the U2D mode by $\mathbb{P}^u_{ts,i} = \mathcal{P}^{u}_{ts,i}\left(1-\mathcal{P}^{c}_{ts,i}\right) + \mathcal{P}^{u}_{ts,i} \mathcal{P}^{c}_{ts,i} \cdot \mathcal{P}\left\{R^u_{i} \geq R^c_{i} \right\}$. Similarly, the $i$-th UAV selects the cellular mode with probability $\mathbb{P}^c_{ts,i} = \left(1-\mathcal{P}^{u}_{ts,i}\right)\mathcal{P}^{c}_{ts,i} + \mathcal{P}^{u}_{ts,i} \mathcal{P}^{c}_{ts,i} \cdot \mathcal{P}\left\{R^u_{i} < R^c_{i} \right\}$. Given the location of the $i$-th UAV $\textbf{\emph{x}}_i$, the location of its mobile device $\textbf{\emph{x}}^d_i$ and the location of the BS $\textbf{\emph{x}}_0$, the terms $\mathcal{P}\left\{R^u_{i} < R^c_{i} \right\}$ and $\mathcal{P}\left\{R^u_{i} \geq R^c_{i} \right\}$ can be expressed as
\begin{align}\label{R_U_R_C}
\mathcal{P}\left\{ R^u_{i} < R^c_{i} \right\} &= \mathcal{P}\left\{ R^u_{i} < \frac{1}{2} \textmd{min}\left\{ R^c_{a,i}, R^c_{g,i} \right\} \right\} = \mathcal{P}\left\{ 2R^u_{i} < R^c_{a,i} \right\} \mathcal{P}\left\{ 2R^u_{i} < R^c_{g,i} \right\} \nonumber \\
&= \mathbb{E} \left\{ \mathcal{P}_{a,i}(\textbf{\emph{x}}_0, \textbf{\emph{x}}_i, 2R^u_{i}) \mathcal{P}_{g,i}(\textbf{\emph{x}}^d_i,\textbf{\emph{x}}_0,2R^u_{i}) \right\} \mathop{\rm{=}}\limits^\triangle \mathbb{E} \left\{ \Psi_i(R^u_{i}) \right\},
\end{align}
and $\mathcal{P}\left\{R^u_{i} \geq R^c_{i} \right\} = 1 - \mathcal{P}\left\{R^u_{i} < R^c_{i} \right\}$. Here, $\mathbb{E} \left\{ \cdot \right\}$ is the mathematical expectation.

Based on the assumptions in Section~\ref{System Model}, if the LoS component exists in the link from $\textbf{\emph{x}}_i$ to $\textbf{\emph{x}}^d_i$, $\zeta_{LoS,i}$ follows the Rice distribution with probability density function~(PDF) $f _{\zeta_{LoS,i}} \left( {x } \right) = \frac{x}{{{\sigma ^2}}}\exp \left( { - \frac{{{x^2} + {A^2}}}{{2{\sigma ^2}}}} \right) \cdot {I_0}\left( {\frac{xA}{\sigma ^2}} \right)$. Here, $I_0(\cdot)$ denote the modified Bessel function of the first kind with order zero. Besides, $A$ and $\sigma$ are parameters, which can be obtained through $\Omega = A^2 + 2\sigma^2 = 1$ and $K_{ri} = \frac{A^2}{2\sigma ^2}$. Thus, the PDF of $R^u_{i}$ is $f_{R_{LoS,i}^u}\left( y \right) = f_{\zeta_{LoS,i}} \left( \frac{2^y - 1}{\xi _{LoS,i}} \right) \frac{2^y\ln 2}{\xi _{LoS,i}}$, where $\xi _{LoS,i} = \frac{P^u}{N_0 \cdot 10^{\mathcal {L}_{LoS,i}/10}}$. Likewise, when the NLoS component exists, $\zeta_{NLoS,i}$ is Rayleigh-distributed with PDF $f _{\zeta_{NLoS,i}} \left( {x } \right) = \frac{x}{\sigma ^2}\exp \left( { - \frac{x^2} {2\sigma ^2}} \right)$, where $\sigma$ is a parameter with the value of one. Then, the PDF of $R^u_{i}$ is given by $f_{R_{NLoS,i}^u}\left( y \right) = f_{\zeta_{NLoS,i}} \left( \frac{2^y - 1}{\xi _{NLoS,i}} \right) \frac{2^y\ln 2}{\xi _{NLoS,i}}$, where $\xi _{NLoS,i} = \frac{P^u}{N_0 \cdot 10^{\mathcal {L}_{NLoS,i}/10}}$. Therefore, given probabilities $\mathcal{P}_{LoS,i}$ and $\mathcal{P}_{NLoS,i}$, we can express the PDF of $R^u_{i}$ as $f_{R_i^u}\left( y \right) = \mathcal{P}_{LoS,i} \cdot f_{R_{LoS,i}^u}\left( y \right) + \mathcal{P}_{NLoS,i} \cdot f_{R_{NLoS,i}^u}\left( y \right)$. Finally, we have
$\mathcal{P}\left\{R^u_{i} < R^c_{i} \right\} =  \int_{-\infty}^\infty \Psi_i(y) f_{R_i^u}\left( y \right)dy$
and $\mathcal{P}\left\{R^u_{i} \geq R^c_{i} \right\} = 1-\int_{-\infty}^\infty \Psi_i(y) f_{R_i^u}\left( y \right)dy$, where $\Psi_i(y)$ is defined in equation~(\ref{R_U_R_C}). Then, $\mathbb{P}^u_{ts,i}$ and $\mathbb{P}^c_{ts,i}$ can be obtained, and the proof ends.

\end{appendices}


\end{document}